\def\moverlay{\mathpalette\mov@rlay}
\def\mov@rlay#1#2{\leavevmode\vtop{%
    \baselineskip\z@skip \lineskiplimit-\maxdimen
    \ialign{\hfil$\m@th#1##$\hfil\cr#2\crcr}}}
\newcommand{\charfusion}[3][\mathord]{
  #1{\ifx#1\mathop\vphantom{#2}\fi
    \mathpalette\mov@rlay{#2\cr#3}
  }
  \ifx#1\mathop\expandafter\displaylimits\fi}
\DeclareRobustCommand\bigop[1]{%
  \mathop{\vphantom{\sum}\mathpalette\bigop@{#1}}\slimits@
}
\newcommand{\bigop@}[2]{%
  \vcenter{%
    \sbox\z@{$#1\sum$}%
    \hbox{\resizebox{\ifx#1\displaystyle.9\fi\dimexpr\ht\z@+\dp\z@}{!}{$\m@th#2$}}%
  }%
}
\newcommand{\cupdot}{\charfusion[\mathbin]{\cup}{\cdot}}
\DeclareMathOperator{\bigcupdot}{\charfusion[\mathop]{\bigcup}{\cdot}}
\newcommand{\bigjoin}{\bigop{\Join}}
\DeclareMathOperator{\join}{\Join}
\newcommand{\AX}[1]{\textnormal{#1}}
\DeclareMathOperator{\lca}{lca}
\DeclareMathOperator{\symdiff}{\triangle}
\newtheorem{problem}{Problem}
\newcommand{\hc}{\emph{hc}}
\providecommand{\keywords}[1]{\textbf{\textit{Keywords: }} #1}
\newtheorem{theorem}{Theorem}
\newtheorem{lemma}{Lemma}
\newtheorem{corollary}{Corollary}
\newtheorem{remark}{Remark}
\newcommand{\PROBLEM}[1]{{\tt #1}}
\newtheorem{definition}{Definition}
\newtheorem{fact}{Observation}
\begin{document}

\title{Complexity of Modification Problems for Reciprocal Best Match  Graphs}

\author[2,3]{Marc Hellmuth}
\author[1]{Manuela Gei{\ss}}
\author[1,4,5,6]{Peter F.\ Stadler}

\affil[1]{\footnotesize Bioinformatics Group, Department of Computer Science; and
		    Interdisciplinary Center of Bioinformatics, University of Leipzig, 
			 H{\"a}rtelstra{\ss}e 16-18, D-04107 Leipzig}
\affil[2]{Dpt.\ of Mathematics and Computer Science, University of Greifswald, Walther-
  Rathenau-Strasse 47, D-17487 Greifswald, Germany 	}
\affil[3]{Saarland University, Center for Bioinformatics, Building E 2.1, P.O.\ Box 151150, D-66041 Saarbr{\"u}cken, Germany }
\affil[4]{Max-Planck-Institute for Mathematics in the Sciences, 
  Inselstra{\ss}e 22, D-04103 Leipzig}
\affil[5]{Inst.\ f.\ Theoretical Chemistry, University of Vienna, 
  W{\"a}hringerstra{\ss}e 17, A-1090 Wien, Austria}
\affil[6]{Santa Fe Institute, 1399 Hyde Park Rd., Santa Fe, USA} 
\date{}
\normalsize

\maketitle

\abstract{   
  Reciprocal best match graphs (RBMGs) are vertex colored graphs whose
  vertices represent genes and the colors the species where the genes
  reside. Edges identify pairs of genes that are most closely related with
  respect to an underlying evolutionary tree. In practical applications
  this tree is unknown and the edges of the RBMGs are inferred by
  quantifying sequence similarity. Due to noise in the data, these
  empirically determined graphs in general violate the condition of being a
  ``biologically feasible'' RBMG. Therefore, it is of practical interest in
  computational biology to correct the initial estimate.  Here we consider
  deletion (remove at most $k$ edges) and editing (add or delete at most
  $k$ edges) problems.  We show that the decision version of the deletion
  and editing problem to obtain RBMGs from vertex colored graphs is
  NP-hard.  Using known results for the so-called bicluster editing, we
  show that the RBMG editing problem for $2$-colored graphs is
  fixed-parameter tractable.

  A restricted class of RBMGs appears in the context of orthology
  detection. These are cographs with a specific type of vertex coloring
  known as hierarchical coloring.  We show that the decision problem of
  modifying a vertex-colored graph (either by edge-deletion or editing)
  into an RBMG with cograph structure or, equivalently, to an
  hierarchically colored cograph is NP-complete.
}

\bigskip
\noindent
\keywords{
  reciprocal best matches; hierarchically colored cographs;
  orthology relation; bicluster graph; editing; NP-hardness; parameterized algorithms
}

\sloppy

\section{Introduction}

Graph modification problems ask whether there is a set of at most $k$ edges
to delete or to edit (add or delete) to change an input graph into a graph
conforming to certain structural prerequisites.  In computational biology
graph modification problems typically appear as ways to deal with inaccurate
data and measurement error, for instance in genome assembly
\cite{Gonnella:16} or clustering \cite{Boecker:08}.

Here we consider graph modification problems that arise in the context of
orthology assignment. Two genes found in two distinct species are
orthologous if they arose through the speciation event that also separated
the two species. Biologically, one expects these two genes to have
corresponding function. In contrast, paralogous genes, which arose through
a duplication event, are expected to have related but distinct functions
\cite{Gabaldon:13}. The distinction of orthologous and paralogous gene
pairs is therefore of key practical importance for the functional
annotation of genomes. Most orthology assignment methods that are currently
in use for large data sets start from reciprocal best matches
\cite{Tatusov:97}, i.e., pairs of genes $x$ in species $A$ and $y$ in
species $B$ such that $y$ is the gene in $B$ most closely related to $x$
and $x$ is the gene in $A$ most closely related to $y$, see e.g.\
\cite{Altenhoff:09,Altenhoff:16,Setubal:18a}.  Reciprocal best matches are
efficiently computed in practice by quantifying sequence
similarity. Conceptually, they are employed because they approximate pairs
of reciprocal evolutionarily most closely related genes. It can be show
rigorously, that all pairs of orthologs are also reciprocal best matches in
the evolutionary sense \cite{GGL:19}.

Genes evolve along a gene tree $T$ from which best matches and reciprocal
best matches can be defined (see next section for precise definitions).
Reciprocal best match graphs (RBMGs) are vertex-colored graphs where the
vertices represent genes and the colors designate the species in which the
genes reside \cite{Geiss:19a}. RBMGs have recently been characterized by
Gei{\ss} et al.\ \cite{Geiss:19a, Geiss:19x}. In practical applications, however, estimates
of RBMGs are plagued with measurement errors and noise, and thus the
empirically inferred graphs usually violate the property of being an RBMG. A
natural remedy to reduce the measurement noise is of course to modify the
empirical graph to the closest RBMG. In the first part of this contribution
we therefore consider the computational complexity of modifying
vertex-colored graphs to RBMGs and show that these problems are NP-hard.

Due the importance of orthology, it also of interest to investigate those
RBMGs that completely describe orthology relationships rather than
containing the orthology relation as a subgraph. It is shown in
\cite{Geiss:19x} that the ``orthology RBMG'' are exactly the
\emph{hierarchically colored cographs} (\hc-cographs). These are cographs
\cite{Corneil:81} with a particular vertex coloring. As shown in
\cite{VGH+19}, every cograph admits a hierarchical coloring; more
precisely, every greedy coloring \cite{Christen:79} of a cograph is
hierarchical (but not \emph{vice versa}).

Simulation studies show that estimates of RBMGs are typically not
\hc-cographs \cite{GGL:19}, and thus do not directly describe orthology.
In fact, they usually feature both false positive and false negative
edges. It is of interest, therefore, to consider the computational problem
of modifying a given vertex colored graph to an \hc-cograph. In the setting
considered here, the assignment of genes to the species in which the occur
is perfectly known, hence the coloring of the vertices must not be
changed. We therefore stay within the realm of graph modification by 
insertion/deletion of edges. The vertex coloring only brings additional
constraints to the table.

Ignoring these additional constraints arising from the vertex coloring, the
analogous problem of editing empirical graphs to the nearest cographs was
used to extract phylogenetic information from empirical reciprocal best
matches in \cite{Hellmuth:15a}. The (uncolored) cograph editing problem is
known to be NP-complete \cite{LIU201245}. Here we show that the
  colored version remains NP-complete.

\section{Preliminaries}

\paragraph{\textbf{Basics}}

Throughout we consider undirected graphs $G=(V,E)$ with vertex set $V$ and
edge set $E \subseteq \binom{V}{2}$.  An edge $\{x,y\}$ between vertices
$x$ and $y$ will be arbitrarily denoted by $xy$ or $yx$. For a graph
$G=(V,E)$ and a subset $W \subseteq V$ we denote by $G[W]=(W,F)$ the
\emph{induced subgraph} of $G$ where $F\subseteq E$ and $xy\in F$ for all
$xy\in E$ with $x,y \in W$.

A \emph{vertex coloring} of $G$ is a surjective map $\sigma:V\to S$. We
will write $(G,\sigma)$ to indicate the vertex coloring $\sigma$ of $G$.  A
graph $G$ is \emph{properly colored} if $xy\in E$ implies
$\sigma(x)\neq \sigma(y)$. A \emph{hub-vertex} $x\in V$ is a vertex that is
adjacent to all vertices in $V\setminus \{x\}$.  Thus, a hub-vertex in a
properly colored graph $(G,\sigma)$ always satisfies
$\sigma(x)\neq\sigma(v)$ for any $v\in V\setminus \{x\}$.

For a graph $G=(V,E)$ and a vertex $x\in V$ we define the sets
$V-x \coloneqq V\setminus \{x\}$ and
$E-x \coloneqq E\setminus\{xv\mid v\in V\}$.  The graph
$G-x\coloneqq ( x,E-x)$ is, therefore, obtained from $G$ by removing vertex
$x$ and all its incident edges.  In addition, we define for a vertex
$x\notin V$ the sets $V+x \coloneqq V\cupdot \{x\}$ and
$E +x \coloneqq E\cupdot\{xv\mid v\in V\}$.  Thus, the graph
$G+x\coloneqq (V+x,E+x)$ is obtained from $G$ by the adding vertex $x$ and
all edges of the form $xv$, $v\in V$, making $x$ to a hub-vertex in $G+x$.
Moreover, we write $G\odot F \coloneqq (V,E\odot F)$, where
$\odot\in\{\setminus, \symdiff\}$ and $\setminus$, resp., $\symdiff$
denotes the usual set-difference, resp., symmetric difference of two sets.
Let $G=(V,E)$ and $H=(W,F)$ be two distinct graphs. We write
$G\cupdot H\coloneqq (V\cupdot W, E\cupdot F)$ for their \emph{disjoint
  union} and
$G\join H\coloneqq (V\cupdot W, E\cupdot F \cupdot \{xy\mid x\in V,y\in
W\})$ for their \emph{join}.

\paragraph{\textbf{Trees}}
A \emph{phylogenetic tree} $T=(V,E)$ (\emph{on $L$}) is a rooted tree with
root $\rho_T$, leaf set $L\subseteq V$ and inner vertices
$V^0=V\setminus L$ such that each inner vertex of $T$ (except possibly the
root) is of degree at least three.

\emph{Throughout this contribution, we assume that every tree is
  phylogenetic.}

The \emph{restriction} $T_{|L'}$ of a tree $T$ to a subset $L'\subseteq L$
of its leaves is the tree with leaf set $L'$ that is obtained from $T$ by
first taking the minimal subtree of $T$ with leaf set $L'$ and then
suppressing all vertices of degree two with the exception of the root
$\rho_{T_{|L'}}$. A \emph{star-tree} is a tree such that the root is
incident to leaves only, i.e, it is either the single vertex graph $K_1$ or
a tree where the root is a hub-vertex.

The last common ancestor $\lca_T(x,y)$ of two distinct leaves $x,y\in L$ is
the vertex that is farthest away from the root and that lies on both two
paths from $\rho_T$ to $x$ and from $\rho_T$ to $y$.  For vertices
$u,v\in V$ we write $u \preceq_T v$ if $v$ lies on the unique path from the
root to $u$.
 
Trees can be equipped with vertex labels. The inner vertex label is defined
as a map $t:V^0 \to \{0,1\}$. The leaf label is a surjective map
$\sigma:L\to S$. In out setting, the map $\sigma$ is used to assign to each
gene $u\in L$ the species $\sigma(u)\in S$ in which $u$ resides.  Moreover,
the labels $0$ and $1$ on $V^0$ indicate what type of mechanism caused a
divergence of lineages: $0$ represents gene duplications and $1$ designates
speciation events.

For a subset $L'\subseteq L$ we write
$\sigma(L')=\{\sigma(x)\mid x\in L'\}$.  Moreover, we use the notation
$\sigma_{|L'}$ for the surjective map $\sigma:L'\to \sigma(L')$.  We also
write $L[s]\coloneqq \{x \mid x\in L, \sigma(x)=s\}$ for the set of all
leaves with color $s$.  In addition we will use the notation $(T,t)$,
$(T,\sigma)$, resp., $(T,t, \sigma)$ to emphasize that the tree $T$ is
equipped with a vertex label $t$, $\sigma$, resp., both.

\paragraph{\textbf{Hierarchically Colored Cographs}}

A graph $G$ is a \emph{cograph} if either $G=K_1$ or $G$ is the disjoint union
$G=\bigcupdot_i G_i$ of two or more cographs $G_i$, or $G$ is the join
$G=\bigjoin_i G_i$ of two or more cographs $G_i$.  An important
characterization states that $G$ is a cograph if and only if it does not
contain a path $P_4$ on 4 vertices as an induced subgraph
\cite{Corneil:81}.  Here we are interested in particular in cographs with a
particular type of vertex coloring, so-called \emph{hierarchically colored
  cographs} \cite{Geiss:19x,VGH+19}. We first define the disjoint
  union and the join for vertex-colored graphs:
\begin{definition}
  Let $(H_1,\sigma_{H_1})$ and $(H_2,\sigma_{H_2})$ be two vertex-disjoint
  colored graphs.  Then
  $(H_1,\sigma_{H_1}) \join (H_2,\sigma_{H_2}) \coloneqq (H_1\join
  H_2,\sigma)$ and
  $(H_1,\sigma_{H_1}) \cupdot (H_2,\sigma_{H_2}) \coloneqq (H_1\cupdot
  H_2,\sigma)$ denotes their join and union, respectively, where
  $\sigma(x) = \sigma_{H_i}(x)$ for every $x\in V(H_i)$, $i\in\{1,2\}$.
\end{definition}

\begin{definition}[\hc-cograph]
  An undirected colored graph $(G,\sigma)$ is a \emph{hierarchically
    colored cograph (\hc-cograph)} if 
 \begin{description}
  \item[\AX{(K1)}] $(G,\sigma)=(K_1,\sigma)$, i.e., a colored vertex, or 
  \item[\AX{(K2)}] $(G,\sigma)= (H,\sigma_H) \join (H',\sigma_{H'})$ and
    $\sigma(V(H))\cap \sigma(V(H'))=\emptyset$, or 
  \item[\AX{(K3)}] $(G,\sigma)= (H,\sigma_H) \cupdot (H',\sigma_{H'})$ and
    $\sigma(V(H))\cap \sigma(V(H')) \in \{\sigma(V(H)),\sigma(V(H'))\}$,
\end{description}
where both $(H,\sigma_H)$ and $(H',\sigma_{H'})$ are \hc-cographs. 
\label{def:hc-cograph}
\end{definition}
\AX{(K2)} ensures that an \hc-cograph is always properly colored, cf.\
\cite[Lemma 43]{Geiss:19x}. Definition~\ref{def:hc-cograph} reduces to the
usual recursive definition of cographs when the coloring information is
ignored. Thus every \hc-cograph is a cograph. The converse, however, is not
true. To see this, consider the graph
$G=K_1\cupdot K_1=(\{x,y\},E=\emptyset)$ with the coloring
$\sigma(x)\neq \sigma(y)$. Although $G$ is clearly a cograph, it violates
Property \AX{(K2)}. However, for every cograph $G$ there is a vertex
  coloring $\sigma$ such that $(G,\sigma)$ is an \hc-cograph \cite{VGH+19}.

 To emphasize  the number $n$ of distinct colors used for the vertices in an \hc-cograph, we often speak explicitly
	of \emph{$n$-\hc-cographs.}

\paragraph{\textbf{Orthology and Reciprocal Best Matches}}

Reciprocal best matches are used in practice to estimate
  orthology. In the following paragraph we clarify the relationship
  between the two concepts to the extent need here. For a more in-depth
  discussion we refer to \cite{GGL:19}.

In the following, let $T=(V,E)$ be a tree on $L$ together with vertex
labeling $t:V^0 \to \{0,1\}$ and $\sigma:L\to S$. We distinguish here two
relationships (orthology and reciprocal best matches) that may hold between
pairs of vertices in $L$. Both relations are defined in terms of the
topology of $T$, however, the orthology relations is defined by means of
the label $t$ and reciprocal best matches are defined by means of the label
$\sigma$.
\begin{definition}[\cite{Fitch:00}]
  Two leaves $x,y\in L$ are \emph{orthologs} in $(T,t)$ if and only if
  $t(\lca_T(x,y))=1$. A graph $G$ is an orthology graph if there is a tree
  $(T,t)$ such that $xy\in E(G)$ if and only if $t(\lca_T(x,y))=1$.
\end{definition}
Orthology is a symmetric relation. It has been shown that the
  orthology graphs are exactly the cographs \cite[Cor.\ 4]{HHH+13}.

\begin{definition}
  The leaf $y$ is a \emph{best match} of the leaf $x$ in the tree
  $(T,\sigma)$ if and only if $\sigma(x)\neq \sigma(y)$ and
  $\lca_T(x, y) \preceq_T \lca_T(x, y')$ for all leaves $y'$ with
  $\sigma (y' ) = \sigma (y)$.  If $x$ is also a best match of $y$, we call
  $x$ and $y$ \emph{reciprocal best matches}.  The graph with vertex set
  $L$ that has precisely all reciprocal best matches of $T$ as its edge is
  denoted by $G(T,\sigma)$.  A properly vertex-colored graph $(G,\sigma)$
  is a \emph{Reciprocal Best Match Graph (RBMG)} if there is a leaf-labeled
  tree $(T,\sigma)$ such that $G(T,\sigma) = (G,\sigma)$.
\end{definition}
In other words, $y$ is a best match of $x$ if $y$ is the closest relative
of $x$ in comparison with all other leaves from species $\sigma(y)$, see
Fig.\ \ref{fig:exmpl} for an illustrative example. We say that $(G,\sigma)$
\emph{is explained} by $(T,\sigma)$ if $G(T,\sigma) = (G,\sigma)$.  To
emphasize the number of species, i.e., the number $n:=|\sigma(L)|$ of
distinct colors, we often speak explicitly of \emph{$n$-RBMGs}.

\begin{figure}
  \begin{center}
    \includegraphics[width=.7\textwidth]{./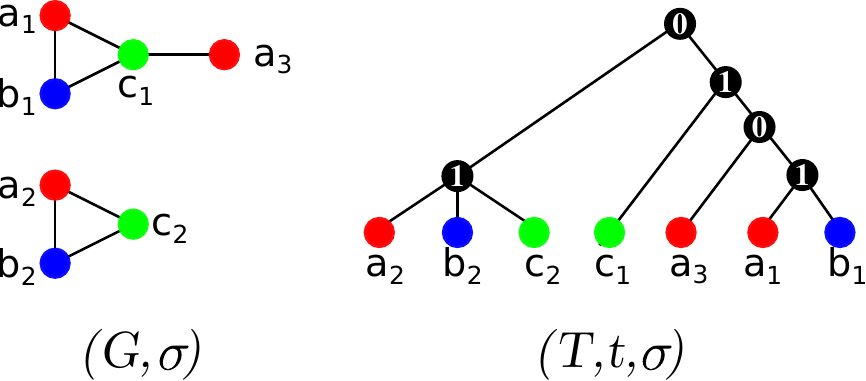}
  \end{center}
  \caption{The disconnected 3-RBMG $(G,\sigma)$ is also an orthology graph
    and thus, by Thm.\ \ref{thm:iffs}, an \hc-cograph.  For the tree
    $(T,t,\sigma)$ we have $G(T,\sigma) = (G,\sigma)$ as well as
    $t(\lca_T(x,y)) = 1$ iff $xy\in E(G)$. }
  \label{fig:exmpl}
\end{figure}

Characterizations of $2$-RBMGs, 3-RBMGs, and RBMGs that are orthology
graphs on $n$ colors have become available \cite{Geiss:19a,Geiss:19x} very
recently. For later reference we summarize these results in
\begin{theorem} Let $(G,\sigma)$ be a colored graph. Then, the following
  statements are satisfied:
\begin{enumerate}
\item $G$ is an orthology graph if and only if $G$ is a cograph.
\item The following statements are equivalent
  \begin{enumerate}
  \item $(G,\sigma)$ is an RBMG and an orthology graph.
  \item $(G,\sigma)$ is an RBMG and a cograph.
  \item $(G,\sigma)$ is an \hc-cograph.
  \end{enumerate}
\item $(G,\sigma)$ is a $2$-RBMG if and only if $(G,\sigma)$ is a properly
  $2$-colored bicluster graph that contains at least one edge.
\item $(G, \sigma)$ is an $n$-RBMG if and
  only if $(G, \sigma)$ is properly colored and each of its connected components is an RBMG and at least one
  connected component $C$ contains all colors, i.e., $|\sigma(V(C))| = n$.
\end{enumerate}
\label{thm:iffs}
\end{theorem}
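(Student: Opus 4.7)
The plan is to prove each statement largely by reducing to structural characterizations of the underlying tree, using induction on the cotree in parts (1) and (2) and on the Newick decomposition of $T$ in parts (3) and (4). Since the theorem collects results established in \cite{HHH+13,Geiss:19a,Geiss:19x}, I would sketch the key ideas rather than reprove every lemma from scratch.

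For part (1), I would first show that every orthology graph is $P_4$-free: given an induced path $a\text{-}b\text{-}c\text{-}d$ with labels satisfying $t(\lca(a,b))=t(\lca(b,c))=t(\lca(c,d))=1$ but $t(\lca(a,c))=t(\lca(b,d))=0$, a case analysis of the relative positions of $\lca(a,b)$, $\lca(b,c)$, and $\lca(c,d)$ in $T$ produces a contradiction, since two of these lca's must coincide or be comparable in a way that forces either $ac$ or $bd$ to also be an edge. The converse follows from the cotree construction: label every ``join'' inner vertex with $1$ and every ``union'' inner vertex with $0$; then $xy$ is an edge in the cograph iff their lowest common ancestor in the cotree is a join, which matches the orthology definition.

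For part (2), the nontrivial implication is that an RBMG which is also a cograph is an \hc-cograph. I would induct on $|V(G)|$ using the cotree: at the top level $(G,\sigma)$ is either a join or a disjoint union of smaller cographs, and one must verify the color constraints \AX{(K2)}/\AX{(K3)}. If the top operation is a join, then properness of the coloring (inherited from being an RBMG) forces color-disjointness across the join, giving \AX{(K2)}; if it is a disjoint union, one uses the tree $(T,\sigma)$ explaining $(G,\sigma)$ to argue that if both components used a color $s$ not contained in the other, then reciprocal best matches across components would be forced, contradicting the disjoint union, so \AX{(K3)} holds. Conversely, an \hc-cograph is realized as an RBMG by building a tree mirroring the recursive \AX{(K1)}--\AX{(K3)} construction: \AX{(K2)} is modeled by a speciation node, \AX{(K3)} by a duplication node, and one checks inductively that reciprocal best matches in the resulting tree are exactly the edges, and simultaneously that this tree with $t=1$ on join nodes and $t=0$ on union nodes is an orthology tree for the same graph.

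For parts (3) and (4), I would argue directly from the tree $(T,\sigma)$. In part (3), with only two colors, $y$ is a best match of $x$ iff $y$ is a leaf of the differently colored child-subtree of $\lca_T(x,y)$ that is closest to $x$; tracing reciprocity shows each connected component is a complete bipartite biclique, hence $(G,\sigma)$ is a properly $2$-colored bicluster graph, and at least one edge exists because the root must have leaves of both colors below it. Conversely, any such bicluster graph is realized by a star-tree on each biclique combined into one tree. In part (4), properness is immediate; since best-match edges only connect leaves under the same inner vertex that contains both colors, each connected component is induced by a subtree of $T$ and is itself an RBMG; the component that contains the two colors meeting at $\rho_T$ must in fact contain all $n$ colors by iterating this argument up the tree. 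Conversely, given components $C_1,\dots,C_r$ with $C_1$ spanning all colors, one can glue the trees realizing the $C_i$'s under a common root, attaching the color-complete $C_1$ last so that no spurious reciprocal best matches are introduced across components.

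The main obstacle is the inductive step in part (2): one must simultaneously maintain the cotree decomposition and the RBMG-explaining tree, and show that the color-constraints \AX{(K2)}/\AX{(K3)} appear precisely, not merely as necessary conditions. This is where the full strength of the characterizations in \cite{Geiss:19x} is used, and it is also what rules out the naive counterexample $K_1\cupdot K_1$ with distinct colors mentioned after Definition~\ref{def:hc-cograph}.
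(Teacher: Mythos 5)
The paper does not actually reprove these statements: its proof of Thm.~\ref{thm:iffs} consists of citations to \cite[Cor.~4]{HHH+13} for (1), \cite[Thm.~9]{Geiss:19x} for (2), and \cite[Thm.~3]{Geiss:19x} for (4); the only self-contained content is the short argument for (3), which combines \cite[Cor.~6]{Geiss:19a} with statement (4) for the only-if direction and, for the if direction, observes that each properly colored biclique is explained by a star-tree and then invokes (4) to assemble the components. Your proposal instead tries to re-derive the cited characterizations from scratch, which is a much more ambitious route, and several of the sketched steps do not go through as stated.

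The most concrete gap is your converse construction in (3) and (4): gluing the star-trees (resp.\ component trees) under a common root creates spurious reciprocal best matches. Take the properly $2$-colored bicluster graph with edge $ab$ ($\sigma(a)=1$, $\sigma(b)=2$) and two isolated vertices $c$ (color $1$) and $d$ (color $2$). Hanging the star-tree for $\{a,b\}$ and the leaves $c,d$ under one root $\rho$ gives $\lca(c,d)=\lca(c,b)=\lca(d,a)=\rho$, so $c$ and $d$ become reciprocal best matches and the resulting RBMG has the extra edge $cd$. The correct construction (this is what \cite[Thm.~3]{Geiss:19x} supplies) arranges the component trees along a caterpillar with the color-complete component deepest; your phrase ``attaching $C_1$ last'' gestures at this but is inconsistent with ``under a common root.'' A second genuine gap is the cotree induction in (2): to apply the induction hypothesis to the children $H,H'$ of the top cotree node you need $(H,\sigma_H)$ and $(H',\sigma_{H'})$ to be RBMGs, and RBMG-ness is \emph{not} hereditary (the paper makes this point explicitly after Cor.~\ref{cor-$2$-NPc}). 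In the join case this follows from Obs.~\ref{obs-1} because the color sets are disjoint, but in the union case $H$ is a union of connected components of $G$, and by (4) it is an RBMG only if one of \emph{its} components carries all of $\sigma(V(H))$ --- which is exactly the content of \AX{(K3)} and must be argued, not assumed. Your union-case argument (a cross-component reciprocal pair is forced if neither color set contains the other) is the right idea but needs the standard deepest-$\lca$ selection to produce an actual reciprocal pair. None of this makes the theorem doubtful --- it is established in the cited works --- but as a proof your sketch has these two load-bearing holes.
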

\begin{proof}
  The first statement is equivalent to \cite[Cor.\ 4]{HHH+13}, the second
  statement is equivalent to \cite[Thm.\ 9]{Geiss:19x} and the last
  statement to \cite[Thm.\ 3]{Geiss:19x}.  For the third statement, note
  that \cite[Cor.\ 6]{Geiss:19a} states that $(G,\sigma)$ is a properly
  $2$-colored bicluster graph, whenever $(G,\sigma)$ is a $2$-RBMG.
  Together with Statement \AX{(4.)} this implies the only-if direction. For
  the if-direction observe that every properly colored biclique $H=(W,E)$
  (which may contain even only one vertex) is an RBMG since it is explained
  by a star-tree $(T_H,\sigma')$ on leaf set $W$.  Since $(G,\sigma)$
  contains an edge, it satisfies the conditions in Statement \AX{(4.)} and
  hence, is a $2$-RBMG.
\end{proof}

We note in passing, that $n$-RBMGs can be recognized in polynomial time for
$n\leq 3$.  However, although a mathematical characterization for
$n$-RBMGs, $n>3$ exists, it is still an open problem whether a
polynomial-time algorithm for their recognition exists \cite{Geiss:19x}.

Vertices in an RBMG $(G,\sigma)$ that have the same color induce, by
definition, an independent set in $G$. Moreover, since the definition of
$x$ and $y$ being reciprocal best matches does not depend on the presence
or absence of vertices $u$ with $\sigma(u)\notin\{\sigma(x),\sigma(y)\}$,
we have
\begin{fact}
  \label{obs-1}
  Let $(G,\sigma)$ be an RBMG explained by $T$ on $L$ and
  $L' := \bigcup_{s\in S'} L[s]$ be the subset of vertices with a
  restricted color set $S'\subseteq S$. Then the induced subgraph
  $(G[L'],\sigma_{|L'})$ is explained by the restriction $T_{|L'}$ of $T$
  to the leaf set $L'$ and thus, an RBMG.
\end{fact}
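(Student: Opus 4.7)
The plan is to show that reciprocal best matches are invariant under restriction to a color-closed leaf subset, i.e., that the ``best match'' relation between two leaves $x,y\in L'$ is witnessed by the same combinatorial data in $T$ as in $T_{|L'}$. The key structural fact I would invoke is that restriction preserves the ancestor order on leaves: for any $x,y\in L'$, the vertex $\lca_T(x,y)$ is not suppressed in the restriction procedure (it has at least two distinct subtrees meeting $L'$), and for any $x,y,z\in L'$ the equivalence $\lca_T(x,y)\preceq_T \lca_T(x,z) \iff \lca_{T_{|L'}}(x,y)\preceq_{T_{|L'}} \lca_{T_{|L'}}(x,z)$ holds. This follows directly from the construction of $T_{|L'}$ (take the minimal subtree spanning $L'$, then suppress degree-two vertices, neither of which changes the relative order of last common ancestors of leaves still present).

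With that in hand, I would proceed in two short steps. First, because $S'\subseteq S$ and $L' = \bigcup_{s\in S'}L[s]$, we have $L[s]\subseteq L'$ for every $s\in S'$; in particular, the set of leaves of color $\sigma(y)$ is the same in $T$ and in $T_{|L'}$ whenever $y\in L'$. Therefore, for $x,y\in L'$ with $\sigma(x)\neq \sigma(y)$, the defining condition ``$\lca_T(x,y)\preceq_T \lca_T(x,y')$ for all $y'$ with $\sigma(y')=\sigma(y)$'' quantifies over exactly the same set of leaves as its analogue in $T_{|L'}$. Combined with the ancestor-order preservation above, this shows that $y$ is a best match of $x$ in $(T,\sigma)$ if and only if $y$ is a best match of $x$ in $(T_{|L'},\sigma_{|L'})$.

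Second, symmetrizing this equivalence in $x$ and $y$, I obtain that $xy$ is a reciprocal best match in $(T,\sigma)$ with both endpoints in $L'$ if and only if $xy$ is a reciprocal best match in $(T_{|L'},\sigma_{|L'})$. Hence $E(G(T_{|L'},\sigma_{|L'})) = E(G(T,\sigma))\cap \binom{L'}{2} = E(G[L'])$, which is precisely the statement that $(G[L'],\sigma_{|L'})$ is explained by $T_{|L'}$, and in particular is an RBMG.

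The main obstacle, if any, is purely bookkeeping: one must verify carefully that the suppression of degree-two vertices in forming $T_{|L'}$ does not affect the $\preceq$ order of last common ancestors of leaves in $L'$. This is intuitive but deserves an explicit sentence; apart from this, the argument is essentially an unfolding of the definitions together with the fact that $L'$ is closed under ``all leaves of a given color in $S'$''.
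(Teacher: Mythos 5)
Your proof is correct and follows the same reasoning the paper relies on: the observation is justified there by the single remark that the reciprocal-best-match condition for $x,y$ depends only on leaves colored $\sigma(x)$ or $\sigma(y)$, all of which survive in the color-closed set $L'$, together with the (implicit) fact that forming $T_{|L'}$ preserves the relative $\preceq$-order of last common ancestors of the remaining leaves. You have simply made explicit the bookkeeping that the paper leaves to the reader.
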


\paragraph{\textbf{Bicluster Graphs}}
As we shall see, there is a close connection between $2$-RBMGs and so-called
bicluster graphs.  For two sets $A$ and $B$, we write
$A\otimes B\coloneqq \{\{x,y\} \mid (x,y)\in A \times B\}$.  A
\emph{biclique} is a complete bipartite graph $G=(V,E)$, i.e., $G$ is
either the single vertex graph $K_1$ or a bipartite graph with bipartition
$V=V_1\cupdot V_2$ and edge set $E = V_1\otimes V_2$.  A \emph{bicluster
  graph} is a graph whose connected components are bicliques.

Bicluster graphs have been the subject of several studies, see e.g.\
\cite{Amit:04, CAI:96, DKJ+01,deSousaFilho2017, DRV:13, Guo+08, Ho:98,
  Peeters:03, Protti2009}. Often the problem is to modify a given graph
into a biclique or bicluster graph in some or the other way.  In this
contribution, we will utilize the so-called bicluster deletion and editing
problem. Amit \cite{Amit:04} showed that these problems are NP-complete.

\begin{problem}[\PROBLEM{Bicluster Deletion}]\ \\
  \begin{tabular}{ll}
    \emph{Input:}    & A bipartite graph $G =(V,E)$ and an integer $k$.\\
    \emph{Question:} & Is there a subset $F\subseteq E$ such that 
                       $G\setminus F$ is a bicluster graph 
                       and $|F|\leq k$?
  \end{tabular}
\end{problem}

\begin{problem}[\PROBLEM{Bicluster Editing}]\ \\
  \begin{tabular}{ll}
    \emph{Input:}    & A bipartite graph $G =(V,E)$ with bipartition
                       $V=V_1\cupdot V_2$  and an integer $k$.\\
    \emph{Question:} & Is there a subset $F\subseteq V_1\otimes V_2$ such that 
                       $G\symdiff F$ is a bicluster graph 
                       and $|F|\leq k$?
  \end{tabular}
\end{problem}

\begin{theorem}[\cite{Amit:04}]
    \PROBLEM{Bicluster Deletion} and \PROBLEM{Bicluster Editing} are NP-complete.
  \label{thm:biCl-NP}
\end{theorem}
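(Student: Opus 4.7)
The plan is to prove membership in NP first, then NP-hardness by polynomial-time reduction from a classical NP-complete graph modification problem. Membership is immediate: the editing set $F$ is a polynomial-size certificate, and verifying $|F|\le k$ together with the property that $G\setminus F$ (resp.\ $G\symdiff F$) is a bicluster graph can be done in linear time, since a graph is a bicluster graph iff each of its connected components is bipartite with all possible cross edges present (compute components via BFS, then compare each component's edge count to the product of its side sizes).

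For hardness, I would reduce from \PROBLEM{Cluster Deletion} and \PROBLEM{Cluster Editing} (modifying a graph into a disjoint union of cliques), both classical NP-complete problems. Given an instance $(G=(V,E),k)$, I construct the bipartite ``double cover'' $G'$ on vertex set $V_1\cupdot V_2$ (two disjoint copies of $V$), with $u_1v_2\in E(G')$ iff $u=v$ or $uv\in E$, and set the new budget $k'=2k$. The structural fact that drives the reduction is: $G$ is a cluster graph (equivalently, $P_3$-free) iff $G'$ is a bicluster graph (equivalently, $P_4$-free bipartite). Indeed, if $G$ decomposes into cliques $K_1,\dots,K_t$, then the paired copies $\{v_1,v_2\mid v\in K_i\}$ induce vertex-disjoint bicliques in $G'$; conversely, an induced $P_3$ on $u,v,w$ in $G$ lifts to an induced $P_4$ on $u_2,u_1,v_2,w_1$ in $G'$, whose adjacencies and non-adjacencies match exactly what is needed.

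The forward budget direction is clean: each edit $uv$ in $G$ becomes the symmetric pair of edits $\{u_1v_2,v_1u_2\}$ in $G'$, doubling the count. The \emph{main obstacle} is the converse: showing that any $2k$-edit solution in $G'$ can be assumed symmetric under the natural involution $\tau$ swapping $V_1\leftrightarrow V_2$, so that it projects to a $k$-edit solution in $G$. Both $G'$ and the family of bicluster graphs are $\tau$-invariant, so if $F''$ is optimal then so is $\tau(F'')$; what requires real work is to show that an optimal $F''$ exists with $F''=\tau(F'')$. The route I would pursue is to argue that every optimal solution preserves the ``diagonal'' edges $v_1v_2$ (deleting one forces a costly rebuild when placing $v_1$ and $v_2$ into their bicliques), which forces each biclique component of $G'\symdiff F''$ to contain both copies of each of its vertices and hence to be setwise fixed by $\tau$, in turn forcing $F''$ itself to be $\tau$-symmetric. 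If this clean symmetrization fails in some pathological configuration, the fallback is a direct reduction from 3-SAT via variable and clause gadgets whose editing cost encodes satisfiability; the deletion variant is handled analogously, starting from \PROBLEM{Cluster Deletion}, where the monotonicity of deletion removes the need to reason about added edges.
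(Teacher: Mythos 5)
The paper offers no proof of this statement at all: Theorem~\ref{thm:biCl-NP} is imported as a black box from Amit \cite{Amit:04}, so there is nothing in the text to compare your argument against, and the question is only whether your standalone proof is complete. Your NP-membership argument is fine, and the structural core of your reduction is correct: the double cover $G'$ is a bicluster graph iff $G$ is a cluster graph (the $P_3\mapsto P_4$ lifting checks out), and a cluster edit set of size $k$ for $G$ lifts to a bicluster edit set of size $2k$ for $G'$.

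The genuine gap is the converse budget direction, which you yourself flag as the ``main obstacle'' and then only sketch. What you have shown is $\mathrm{opt}(G')\le 2\,\mathrm{opt}(G)$, which by itself yields no hardness; you need that every bicluster edit (resp.\ deletion) set for $G'$ of size at most $2k$ can be converted into a cluster edit (resp.\ deletion) set for $G$ of size at most $k$. Your proposed route does reduce to a single claim --- that some optimal solution for $G'$ retains every diagonal edge $v_1v_2$ (from which $\tau$-invariance of the components, hence of $F''$, and the halving projection all follow) --- but that claim is exactly what is asserted rather than proved: ``deleting a diagonal forces a costly rebuild'' is a heuristic, and making it rigorous requires an exchange argument showing that relocating $v_1$ and $v_2$ into a common, mirrored biclique never increases the cost. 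The $\tau$-invariance of the instance only gives you that $\tau(F'')$ is optimal whenever $F''$ is, not that a $\tau$-fixed optimum exists. The same asymmetry problem ($u_1v_2$ deleted but $u_2v_1$ kept) persists in the deletion variant, where ``monotonicity'' does not dispose of it. The declared fallback (a direct reduction from 3-SAT with unspecified gadgets) is not a proof. Until the diagonal-preservation/symmetrization lemma is actually established, the reduction is incomplete; note also that even a repaired version would rest on the NP-completeness of \PROBLEM{Cluster Editing}/\PROBLEM{Cluster Deletion}, itself a nontrivial imported result, so you would not have made the theorem self-contained so much as traded one citation for another.
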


The bicluster completion problem, which consists in finding the minimum
number of edges to add so that the resulting graph is a bicluster graph can
be solved in polynomial time.  To this end it is only necessary to identify
connected components and to add edges in each component to form a biclique.

\section{Complexity Results}

In the following we are interested in several problems that are
concerned with modifying an colored graph to RBMGs or \hc-cographs.  In
particular, we consider the following decision problems:

\begin{problem}[\PROBLEM{n-RBMG Deletion / n-\hc-cograph Deletion}]\ \\
  \begin{tabular}{ll}
    \emph{Input:}    & A properly $n$-colored graph $(G =(V,E),\sigma)$ 
                       and an integer $k$.\\
    \emph{Question:} & Is there a subset $F\subseteq E$ such that $|F|\leq
                       k$ and $(G\setminus F,\sigma)$ is an RBMG, resp., \\ &
                       \hc-cograph?
  \end{tabular}
\end{problem}

\begin{problem}[\PROBLEM{n-RBMG Editing / n-\hc-cograph Editing}]\ \\
  \begin{tabular}{ll}
    \emph{Input:}    & A properly $n$-colored graph $(G =(V,E),\sigma)$ 
                       and an integer $k$.\\
    \emph{Question:} & Is there a subset $F\subseteq \binom{V}{2}$ such
                       that $|F|\leq k$ and
                       $(G\symdiff F,\sigma)$ is an RBMG, resp., \\ & \hc-cograph? 
  \end{tabular}
\end{problem}

Note that the two problems \PROBLEM{n-RBMG Deletion} and \PROBLEM{n-\hc-cograph
  Deletion} are equivalent to the problem of finding a spanning subgraph
$(H,\sigma)$ of $(G,\sigma)$ with a maximum number of edges so that
$(H,\sigma)$ is an $n$-RBMG and an $n$-\hc-cograph, respectively.
 
\begin{remark}
  The input of the latter problems is a properly $n$-colored graph
  $(G =(V,E),\sigma)$.  Thus, for all edges $xy$ in $(G,\sigma)$, we have
  $\sigma(x)\neq\sigma(y)$.  Clearly, for $F\subseteq E$ we thus have
  $\sigma(x)\neq\sigma(y)$ for all edges $xy\in F$.  Moreover, for the two
  editing problems \PROBLEM{n-RBMG Editing} and \PROBLEM{n-\hc-cograph Editing},
  the graph $(G\symdiff F,\sigma)$ must in both cases be an RBMG (cf.\
  Thm.\ \ref{thm:iffs}).  This implies that $\sigma(x)\neq\sigma(y)$ for
  all edges $xy\in F$.

  In summary, an optimal deletion or edit set $F$ will never contain pairs
  $\{x,y\}$ with $\sigma(x) =\sigma(y)$.
  \label{obs:F-wellcolored}
\end{remark}

Note that if $(G,\sigma)$ is an $n$-colored but edge-less graph, then Thm.\
\ref{thm:iffs}(4) implies that $(G,\sigma)$ is not an RBMG.  In this case,
only the editing problem would be of interest for us.  However, this is a
trivial endeavor since an optimal edit set $F$ for $n$-colored edge-less
graphs always satisfies $|F| = \binom{n}{2}$. To see this, observe
first that we must connect the vertices such that at least one component
contains all colors. Moreover, by \cite[Cor.\ 2]{Geiss:19x}, there must be
an edge for every pair of distinct colors. Thus, at least $\binom{n}{2}$
edges must be added.  This can trivially be achieved by picking
$n$ distinctly colored vertices and connect them all by an edge
to obtain the modified graph $(H,\sigma)$.  Hence, if $(H,\sigma)$ is
disconnected, all its connected components are $K_1$s except the newly
created $n$-colored complete subgraph. Otherwise, $(H,\sigma)$ is an
$n$-colored complete graph $K_n$.  It is easy to see that $(H,\sigma)$ is
an \hc-cograph and thus, by Thm.\ \ref{thm:iffs} an RBMG.  Since at least
$\binom{n}{2}$ edges must be added, this editing is optimal and thus,
$|F| = \binom{n}{2}$.

\begin{remark}
  We will exclude the latter trivial case and will, from here on,
  \emph{always} assume that $(G,\sigma)$ contains at least one edge.
\label{rem:withEdges}
\end{remark}

\subsection{Graphs on two colors}

Since bicluster graphs do not contained induced $P_4$s, 
they are cographs. This together with Thm.\ \ref{thm:iffs} implies
\begin{corollary}
  The following statements are equivalent:
  \begin{enumerate}
  \item $(G,\sigma)$ is a $2$-RBMG. 
  \item $(G,\sigma)$ is a properly $2$-colored bicluster graph that
      contains at least one edge.
  \item $(G,\sigma)$ is a $2$-RBMG and an orthology graph.
  \item $(G,\sigma)$ is a $2$-\hc-cograph.
  \end{enumerate}
\label{cor:equiDefs}	
\end{corollary}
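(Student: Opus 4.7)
The plan is to show the four statements are equivalent by cycling through them, drawing each implication directly from a part of Theorem~\ref{thm:iffs}.

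The equivalence (1) $\Leftrightarrow$ (2) is nothing more than Theorem~\ref{thm:iffs}(3), so no work is required there. For (1) $\Rightarrow$ (3), I would use the equivalence with (2) already in hand: once $(G,\sigma)$ is known to be a properly $2$-colored bicluster graph, the observation preceding the corollary states that bicluster graphs have no induced $P_4$ and are therefore cographs. Applying Theorem~\ref{thm:iffs}(1) then yields that $G$ is an orthology graph, which combined with (1) gives (3).

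For (3) $\Rightarrow$ (4), I would invoke Theorem~\ref{thm:iffs}(2), where items (a) and (c) assert that an RBMG which is also an orthology graph is exactly an \hc-cograph. Since $(G,\sigma)$ is properly $2$-colored by assumption (being a $2$-RBMG), the resulting \hc-cograph uses exactly two colors, i.e.\ (4) holds. Closing the cycle, (4) $\Rightarrow$ (1) is again immediate from Theorem~\ref{thm:iffs}(2): an \hc-cograph is in particular an RBMG, and a $2$-\hc-cograph uses exactly two colors, so it is a $2$-RBMG.

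There is essentially no obstacle here; the entire argument is a bookkeeping exercise around Theorem~\ref{thm:iffs}. The only point that warrants a moment of care is making sure the ``$n=2$'' qualifier is transported consistently across the four statements — in particular, that moving from statement~(c) of Theorem~\ref{thm:iffs}(2) to (4) of the corollary really gives a \emph{$2$-}\hc-cograph, and similarly that the ``contains at least one edge'' clause appearing in (2) is already built into (1) via Theorem~\ref{thm:iffs}(3). Both points are immediate from the definitions.
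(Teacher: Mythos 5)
Your proof is correct and follows essentially the same route as the paper, which simply notes that bicluster graphs contain no induced $P_4$ and hence are cographs, and then reads off the equivalences from Thm.~\ref{thm:iffs}. Your cycle $(1)\Leftrightarrow(2)$, $(1)\Rightarrow(3)\Rightarrow(4)\Rightarrow(1)$ is exactly the intended bookkeeping, including the correct handling of the two-color qualifier and the ``at least one edge'' clause.
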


In the following we say that $F$ is an  \emph{(RBMG) deletion set} or an
  \emph{edit set}, resp., for a properly $n$-colored graph $(G,\sigma)$ if
  $(G-F,\sigma)$ or $(G\symdiff F,\sigma)$, resp., is an
  $n$-RBMG. Moreover, we say that a deletion or edit set $F$ is optimal if
$F$ has the smallest number of elements among all deletion or edit sets
$F'$ that yield an $n$-RBMG $(G\symdiff F',\sigma)$.

\begin{lemma}
  Let $(G,\sigma)$ be a bipartite graph whose vertices are colored with 2
  colors based on the bipartition $V_1,V_2$ of $V(G)$. Then, the following
  statements are satisfied:
\begin{enumerate}
\item If $F\subseteq V_1\otimes V_2$ is a minimum-sized edit set making
  $G\symdiff F$ to a bicluster graph, then $G\symdiff F$ contains at least
  one edge and $(G\symdiff F,\sigma)$ is a $2$-RBMG.
\item If $F\subseteq E$ is a minimum-sized deletion set making
  $G\setminus F$ to a bicluster graph, then $G\setminus F$ contains at
  least one edge and $(G\setminus F,\sigma)$ is a $2$-RBMG.
\end{enumerate}
\label{lem:1edge}
\end{lemma}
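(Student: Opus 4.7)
The plan is to combine Corollary~\ref{cor:equiDefs} with a short minimality argument. Corollary~\ref{cor:equiDefs} identifies $2$-RBMGs as exactly the properly $2$-colored bicluster graphs that contain at least one edge. Both statements therefore reduce, for the modified graph, to verifying that (i) it is still properly $2$-colored and (ii) it contains at least one edge.

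Claim (i) is immediate from the bipartite hypothesis. Since $G$ is bipartite with bipartition $V_1 \cupdot V_2$, every element of $E$ and every element of $V_1\otimes V_2$ joins a vertex in $V_1$ to a vertex in $V_2$. Hence both $G\symdiff F$ and $G\setminus F$ remain bipartite with respect to the same bipartition, and are therefore still properly colored by $\sigma$.

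The main step is (ii), which I would establish by contradiction, using the minimality of $F$ and Remark~\ref{rem:withEdges}, which guarantees that $G$ contains at least one edge $xy\in E$. Suppose the modified graph is edgeless. In the editing case this forces every edge of $G$ to be toggled off by $F$ and forbids any non-edge from being toggled on, so $F = E$; in the deletion case $F = E$ is immediate. Now set $F' := E \setminus \{xy\}$. Then $|F'| = |F| - 1$, and a direct check shows that both $G\symdiff F'$ and $G\setminus F'$ equal the graph on $V$ with single edge $xy$, which is the disjoint union of the biclique $K_{1,1}$ and isolated $K_1$-bicliques and hence a bicluster graph. This contradicts the minimality of $F$ and establishes (ii). The only mildly delicate point is noting that a single-edge graph together with isolated vertices does qualify as a bicluster graph, which is immediate from the definition since $K_2 = K_{1,1}$ and $K_1$ are bicliques; no further structural argument is required.
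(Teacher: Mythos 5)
Your proof is correct and follows essentially the same route as the paper's: the same contradiction argument (if the modified graph were edgeless then $F=E$, and removing one edge $xy$ from $F$ yields a smaller modification set whose result is still a bicluster graph), followed by the same appeal to Corollary~\ref{cor:equiDefs} via proper $2$-coloring and the existence of an edge. No gaps.
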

\begin{proof}
  By Remark \ref{rem:withEdges}, we assume that $(G=(V,E),\sigma)$ contains
  at least one edge.  Assume, for contradiction, that $G\setminus F$,
  resp., $G\symdiff F$ does not contain edges and thus, $F=E$ and
  $G\setminus F = G\symdiff F$. Let $F' = F\setminus \{xy\}$ with
  $xy\in E$. In this case, $G\setminus F' = G\symdiff F'$ contains the
  single edge $xy$ and hence, is a bicluster graph.  However, $|F'|<|F|$
  contradicts the optimality of $F$. Thus, $G\setminus F$ and
  $G\symdiff F$ must contain at least one edge.

  We continue with showing that $(G\symdiff F,\sigma)$ and
  $(G\setminus F,\sigma)$ are $2$-RBMGs.  First observe that $(G,\sigma)$ is
  properly $2$-colored since the vertices of $G$ are colored w.r.t.\ the
  bipartition $V_1,V_2$.  By the latter arguments, $(G\symdiff F,\sigma)$
  and $(G\setminus F,\sigma)$ contain at least one edge.  Moreover, by
  construction, $F$ contains only (non)edges between distinctly colored
  vertices. Hence, $G\symdiff F$ and $G\setminus F$ are properly
  $2$-colored bicluster graphs with at least one edge and thus, by Cor.\
  \ref{cor:equiDefs}, they are $2$-RBMGs.
\end{proof}

Taken the latter results, we can easily derive the following
\begin{corollary}
  \PROBLEM{2-RBMG Deletion}, \PROBLEM{2-\hc-Cograph Deletion}, 
  \PROBLEM{2-RBMG Editing}, \PROBLEM{2-\hc-Cograph Editing}, 
  are NP-complete.
\label{cor-$2$-NPc}  
\end{corollary}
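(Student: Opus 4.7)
The plan is to show NP-completeness of all four problems via a uniform polynomial-time reduction from \PROBLEM{Bicluster Deletion} and \PROBLEM{Bicluster Editing} (both NP-complete by Thm.\ \ref{thm:biCl-NP}). Membership in NP is immediate: given a candidate $F$ with $|F|\le k$, verifying that $(G\setminus F,\sigma)$ or $(G\symdiff F,\sigma)$ is a 2-RBMG (equivalently, a 2-\hc-cograph, by Cor.\ \ref{cor:equiDefs}) amounts to checking proper 2-coloring, the bicluster property, and the presence of at least one edge, all of which are polynomial-time tests.

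For the reduction itself, given $(G=(V_1\cupdot V_2, E), k)$ I would produce the instance $((G,\sigma),k)$ with $\sigma(v)=i$ for $v\in V_i$. Since $G$ is bipartite, $(G,\sigma)$ is properly 2-colored, and the construction is evidently polynomial. For the forward direction of correctness, if $F$ is any bicluster modification set with $|F|\le k$, I pick an $F'$ of minimum cardinality, so $|F'|\le k$, and invoke Lem.\ \ref{lem:1edge} to conclude that the modified graph is a 2-RBMG (and thus a 2-\hc-cograph by Cor.\ \ref{cor:equiDefs}). Conversely, if $F$ is a 2-RBMG or 2-\hc-cograph modification set of size at most $k$, then by Cor.\ \ref{cor:equiDefs} the resulting graph is a properly 2-colored bicluster graph; $F$ is a bicluster deletion set without further ado, while for editing the proper 2-coloring forces $F\subseteq V_1\otimes V_2$ (adding an edge inside some $V_i$ would create a monochromatic pair and destroy proper coloring), so $F$ is a valid bicluster edit set. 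Hence $(G,k)$ is a yes-instance of the corresponding bicluster problem.

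The only delicate point --- and the step I expect to be the main obstacle when writing a formal proof --- is the ``at least one edge'' clause in the characterization of 2-RBMGs: a priori, a minimum bicluster modification could return the edgeless graph, which by Thm.\ \ref{thm:iffs}(4) fails to be an RBMG, so the implication ``bicluster yes-instance $\Rightarrow$ 2-RBMG yes-instance'' is not completely trivial. This gap is exactly what Lem.\ \ref{lem:1edge} was designed to close (by the swap argument that replaces a saturated deletion with one that keeps a single edge), and once it is in hand, everything else reduces to routine bookkeeping using Cor.\ \ref{cor:equiDefs} and Rem.\ \ref{obs:F-wellcolored}.
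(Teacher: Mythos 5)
Your proposal is correct and follows essentially the same route as the paper: NP membership via the polynomial-time bicluster-plus-one-edge check, NP-hardness by reducing \PROBLEM{Bicluster Deletion}/\PROBLEM{Editing} using the bipartition as the $2$-coloring, with Lemma~\ref{lem:1edge} closing the ``at least one edge'' gap in the forward direction and Cor.~\ref{cor:equiDefs} together with proper coloring handling the converse. Your explicit passage from an arbitrary witness of size $\le k$ to a minimum-cardinality one is a slightly more careful treatment of the decision/optimization interface than the paper's wording, but it is not a different argument.
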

\begin{proof}
  By Cor.\ \ref{cor:equiDefs}, the two problems \PROBLEM{2-RBMG Deletion}
  and \PROBLEM{2-\hc-Cograph Deletion} as well as the two problems
  \PROBLEM{2-RBMG Editing} and \PROBLEM{2-\hc-Cograph Editing} are
  equivalent.  Thus, it suffices to show that \PROBLEM{2-RBMG Deletion} and
  \PROBLEM{2-RBMG Editing} are NP-complete.  In order to verify that a
  properly $2$-colored graph is a $2$-RBMG, Cor.\ \ref{cor:equiDefs}
  implies that it suffices to check that it is a bicluster graph with at
  least one edge, a task that can clearly be done in polynomial time.
  Thus, \PROBLEM{2-RBMG Deletion} and \PROBLEM{2-RBMG Editing} are contained
  in NP.

  We proceed with showing the NP-hardness.  To this end let $G =(V,E)$ be
  an arbitrary instance of \PROBLEM{Bicluster Deletion}, resp.,
  \PROBLEM{Bicluster Editing}.  Thus, $G =(V,E)$ is a bipartite graph with
  partition $V_1\cupdot V_2$ of $G$.  Hence, we can establish a $2$-coloring
  $\sigma$ of $V$ w.r.t.\ the two set $V_1$ and $V_2$.

  If $F\subseteq E$ is a minimum-sized deletion set making $G\setminus F$
  to a bicluster graph, then Lemma \ref{lem:1edge} implies that
  $(G\setminus F,\sigma)$ is a $2$-RBMG.  Conversely, if $F\subseteq E$ is a
  minimum-sized deletion set making $(G\setminus F, \sigma)$ to a $2$-RBMG,
  then Cor.\ \ref{cor:equiDefs} implies that $G\setminus F$ is a bicluster
  graph.  This establishes the NP-hardness of \PROBLEM{2-RBMG Deletion} and
  \PROBLEM{2-\hc-Cograph Deletion}.

  Assume now that $F\subseteq V_1\otimes V_2$ is a minimum-sized edit set
  making $G\Delta F$ to a bicluster graph.  Then, by Lemma \ref{lem:1edge},
  $(G\Delta F, \sigma)$ is a $2$-RBMG.  Conversely, suppose that
  $F\subseteq \binom{V}{2}$ is an optimal edit set for $(G,\sigma)$.  By
  Remark \ref{obs:F-wellcolored}, the edit set $F$ will never contain pairs
  $\{x,y\}$ with $\sigma(x) =\sigma(y)$.  Hence,
  $F\subseteq V_1\otimes V_2$. This together with Cor.\ \ref{cor:equiDefs}
  implies that $(G',\sigma)$ is bicluster graph.  This establishes the
  NP-hardness of \PROBLEM{2-RBMG Editing} and \PROBLEM{2-\hc-Cograph
    Editing}.
\end{proof}

\subsection{Graphs with more than two colors}

Next, we will show that \PROBLEM{n-RBMG Deletion} and \PROBLEM{n-RBMG
  Editing} is NP-hard by employing Cor.\ \ref{cor-$2$-NPc}. To this end, we
stepwisely extend an instance $(G,\sigma)$ of \PROBLEM{2-RBMG Deletion /
  Editing} to an instance of \PROBLEM{n-RBMG Deletion / Editing} by adding
$n-2$ hub-vertices. This eventually allows us to show that an optimal
deletion, resp., edit set for $(G,\sigma)$ is also an optimal deletion,
resp., edit set for the constructed \PROBLEM{n-RBMG Editing} instance.

\begin{lemma}
  Let $n>1$. Then, $(G,\sigma)$ is an $(n-1)$-RBMG if and only if
  $(G+x,\sigma')$ with $\sigma'(v) = \sigma(v)$ and
  $\sigma'(x)\neq \sigma(v)$ for all $v\in V(G)$ is an $n$-RBMG.
\label{lem:G+x}
\end{lemma}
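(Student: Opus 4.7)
The plan is to prove both directions by explicit tree manipulations, using Observation~\ref{obs-1} for one direction and an explicit construction for the other.

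For the forward direction, suppose $(G,\sigma)$ is an $(n-1)$-RBMG with an explaining tree $(T,\sigma)$ on leaf set $L=V(G)$. I would construct $(T',\sigma')$ from $T$ by introducing a new root $\rho'$ whose children are the old root $\rho_T$ and the new leaf $x$. This is a phylogenetic tree in the sense of the paper: $\rho'$ is allowed to have degree two, and $\rho_T$ becomes an interior vertex whose degree increases by one, so its degree remains at least three (the borderline case $T=K_1$ also works, since $T'$ is then a valid star-tree on two leaves). Because $\sigma'(x)$ is a color not used anywhere in $L$, the coloring $\sigma'$ is still surjective onto the new color set of size $n$, and the restriction $\sigma'_{|L}=\sigma$.

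Next I would verify that $G(T',\sigma')=G+x$ by analysing the three types of vertex pairs. For two leaves $u,v\in L$, the path from $\rho'$ to each goes through $\rho_T$, so $\lca_{T'}(u,v)=\lca_T(u,v)$ and the reciprocal best match relation between them is unchanged. For the pairs $\{x,v\}$ with $v\in L$: since $x$ is the unique leaf of color $\sigma'(x)$, it is trivially the best match of every $v$; conversely, $\lca_{T'}(x,v)=\rho'$ for \emph{every} $v\in L$, so the best match condition $\lca_{T'}(x,v)\preceq_{T'}\lca_{T'}(x,v')$ for all leaves $v'$ with $\sigma'(v')=\sigma'(v)$ is trivially satisfied. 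Hence every $v\in L$ is also a reciprocal best match of $x$, which gives exactly the edge set of $G+x$. Thus $(G+x,\sigma')$ is an $n$-RBMG.

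For the backward direction, suppose $(G+x,\sigma')$ is an $n$-RBMG explained by some tree $(T^*,\sigma')$ on leaf set $V(G)\cupdot\{x\}$. Set $S':=\sigma(V(G))=\sigma'(V(G+x))\setminus\{\sigma'(x)\}$, so that the corresponding leaf subset is $L'=V(G)$. Applying Observation~\ref{obs-1} directly shows that $(G+x)[V(G)]$ is an RBMG, explained by the restriction $T^*_{|V(G)}$ together with $\sigma'_{|V(G)}=\sigma$. Since $(G+x)[V(G)]=G$ and the coloring uses exactly the $n-1$ colors in $S'$, the graph $(G,\sigma)$ is an $(n-1)$-RBMG.

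I do not expect real obstacles here; the only point requiring a touch of care is making sure the attachment of $x$ yields a valid phylogenetic tree (in particular, handling the degenerate case $V(G)=\{v\}$), and verifying that no spurious best-match edges are created between leaves of $L$ when the tree is extended above $\rho_T$—both of which follow immediately from the fact that the new $\lca$ of any two original leaves is unchanged.
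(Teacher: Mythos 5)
Your proof is correct and follows essentially the same route as the paper: an explicit tree construction placing $x$ adjacent to the (old or new) root for the forward direction, and Observation~\ref{obs-1} applied to the color set $S\setminus\{\sigma'(x)\}$ for the converse. The only cosmetic difference is that the paper attaches $x$ directly as a child of the existing root $\rho_T$ rather than introducing a new root above it, which sidesteps the degree bookkeeping you handle for $\rho_T$.
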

\begin{proof}
  Let $(G =(V,E),\sigma)$ be an $(n-1)$-RBMG. Hence, there is a tree
  $(T,\sigma)$ that explains $(G,\sigma)$.  Add $x$ as a new leaf to
  $(T,\sigma)$ such that $x$ is incident to the root $\rho$ of $T$, which
  results in the tree $(T^{x},\sigma')$. 
  To verify that $(G+x,\sigma')$ is an $n$-RBMG, it suffices to show that
  $(T^{x},\sigma')$ explains $(G+x,\sigma')$.  Since
  $(T,\sigma)=(T^{x}_{|V},\sigma'_{|V})$ and $\sigma'(x)\notin\sigma(V)$,
  the RBMG $(G,\sigma)$ is clearly explained by the restriction
  $(T^{x}_{|V},\sigma'_{|V})$.  Moreover, we have
  $\lca_{T^{x}}(x,v)=\lca_{T^{x}}(x,v')=\rho$ for all $v,v'\in V$.  This
  and $\sigma'(x)\notin\sigma(V)$ immediately implies $xv\in E(G(T^{x}))$
  for every $v\in V$. Hence, $(G+x,\sigma')=G(T^{x},\sigma')$, i.e.,
  $(T^{x},\sigma')$ explains $(G+x,\sigma')$. This and
  $|\sigma(V)|+1 = n = |\sigma'(V+x)|$ implies that $(G+x,\sigma')$ is an
  $n$-RBMG.
  
  Let $(G+x = (V,E),\sigma')$ be an $n$-RBMG where $\sigma(x)=r$ and let
  $(T,\sigma')$ be a tree on $L$ that explains $(G+x,\sigma')$.  Let
  $S'=S\setminus \{r\}$ and $L' := \bigcup_{s\in S'} L[s]$.  By Obs.\
  \ref{obs-1}, $((G+x)[L'],\sigma'_{|L'})$ is an $(n-1)$-RBMG.  By
  definition, $x$ is the only vertex in $G+x$ with color $r$ and thus
  $L' = V-x$. This in particular implies that $(G+x)[L'] = G$ and
  $\sigma(v)= \sigma'_{|L'}(v)$ for all $v\in V-x$.  Hence, $(G,\sigma)$ is
  an $(n-1)$-RBMG.
\end{proof}

\begin{lemma}
  Let $(G,\sigma)$ be a properly $n$-colored graph with hub-vertex $x$.
  Moreover, let $F$ be an optimal RBMG deletion, resp., edit set for $(G,\sigma)$.
	Then, $F$ does not contain any of the edges $xv\in E$.
  \label{lem:No-x-edits}
\end{lemma}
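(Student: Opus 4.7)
The plan is to show that any edge $xv\in F$ can be removed from $F$ without destroying the RBMG property of the modified graph. Since $x$ is a hub-vertex and $(G,\sigma)$ is properly colored, $\sigma(x)$ occurs only at $x$: any other vertex $y$ with $\sigma(y)=\sigma(x)$ would be adjacent to $x$ by the hub property, violating proper coloring. In particular, every pair $\{x,v\}$ with $v\in V\setminus\{x\}$ is already an edge of $G$, so any element of $F$ incident to $x$ is necessarily an edge-deletion in both the deletion and the editing scenarios.

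Now let $H := G\setminus F$ in the deletion case and $H := G\symdiff F$ in the editing case; by assumption $(H,\sigma)$ is an $n$-RBMG. Define $F' := F\setminus\{xv \mid v\in V\setminus\{x\}\}$. Since $F'$ and $F$ agree on every pair not incident to $x$ and $F'$ leaves every $xv$ untouched, the modified graph corresponding to $F'$ coincides with $H$ on $V\setminus\{x\}$ and makes $x$ adjacent to every other vertex, i.e., it equals $(H-x)+x$ equipped with the inherited coloring.

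The key step is verifying that $(H-x)+x$ is again an $n$-RBMG. Since $x$ is the unique vertex of color $\sigma(x)$, the induced subgraph $H-x$ uses exactly the $n-1$ colors $\sigma(V)\setminus\{\sigma(x)\}$ and, by Fact \ref{obs-1}, $(H-x,\sigma_{|V-x})$ is an $(n-1)$-RBMG. Lemma \ref{lem:G+x} then applies, because the reattached vertex $x$ carries a color absent from $V\setminus\{x\}$, yielding that $(H-x)+x$ equipped with $\sigma$ is an $n$-RBMG. Hence $F'$ is itself a valid deletion (resp.\ edit) set for $(G,\sigma)$.

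Finally, if $F$ contained any edge of the form $xv$, then $|F'|<|F|$, contradicting the optimality of $F$. The main subtlety is ensuring the same argument covers both the deletion and the editing case; this is handled by the initial observation that the pairs $\{x,v\}$ are already edges of $G$ so that $F$ can only delete them (never insert), together with the color-count check needed to legitimately invoke Lemma \ref{lem:G+x}.
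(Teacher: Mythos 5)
Your proof is correct and follows essentially the same route as the paper's: restrict the repaired graph $H=G\odot F$ to $V-x$ (using that $x$ is the unique vertex of its color, so Observation~\ref{obs-1} gives an $(n-1)$-RBMG), reattach $x$ as a hub via Lemma~\ref{lem:G+x}, identify the result with $G\odot F'$ where $F'$ drops all pairs incident to $x$, and contradict optimality. The paper's text cites Lemma~\ref{lem:opt-edit-set} at the reattachment step, which appears to be a typo for Lemma~\ref{lem:G+x}; your citation is the correct one.
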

\begin{proof}
  Let $F$ be an optimal deletion, resp., edit set for $(G,\sigma)$. 
  Assume, for contradiction, that $F$ contains at least one edge $xv\in E$.
  Partition $F$ into a set $F_x$ that contains all edges of the form
  $xw\in F$ and $F_{\neg x}=F\setminus F_x$

  Now, put $H\coloneqq G\odot F$ with $\odot\in\{\setminus, \symdiff\}$.
  Thus, $(H,\sigma)$ is an $n$-RBMG that is explained by a tree
  $(T,\sigma)$ on $L$.  Let $S'=S\setminus \{r\}$, where $\sigma(x)=r$, and
  $L' := \bigcup_{s\in S'} L[s]$.  
  Since $x$ is a hub-vertex in $(G,\sigma)$, $x$ is the only vertex in
  $(G,\sigma)$ with color $r$ and hence, $L'=V-x$.
	This and Obs.\ \ref{obs-1} imply that 
  $(H[L'] ,\sigma_{|L'})$ is an $(n-1)$-RBMG and Lemma
  \ref{lem:opt-edit-set} implies that $(H[L']+x,\sigma)$ is an $n$-RBMG.

 Thus, by
  construction, $H[L']+x = G\odot F_{\neg x}$ and therefore,
  $(G\odot F_{\neg x},\sigma)$ is an $n$-RBMG.  However,
  $|F_{\neg x}|< |F_{\neg x}|+|F_x| =|F|$; contradicting the optimality
  of $F$.
\end{proof}

\begin{lemma}
  Let $(G = (V,E),\sigma)$ be a properly $n$-colored graph and suppose that
  $(G,\sigma)$ contains a hub-vertex $x$.  Let
  $(H,\sigma')\coloneqq(G-x, \sigma_{|V-x})$.  Then, $F$ is an optimal
  deletion or edit set with $(G\setminus F,\sigma)$ or
  $(G\symdiff F,\sigma)$ being an $n$-RBMG if and only if $F$ is an optimal
  set such that $(H\setminus F,\sigma')$, resp., $(H\symdiff F,\sigma')$ is
  an $(n-1)$-RBMG.
  \label{lem:opt-edit-set} 
\end{lemma}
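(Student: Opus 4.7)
The plan is to use the fact that $x$ is a hub-vertex of $(G,\sigma)$ to reduce everything to Lemma \ref{lem:G+x}. Since $x$ is a hub-vertex, we have $G = H+x$ in the notation of the preliminaries, and every pair $\{x,v\}$ with $v\in V-x$ is already an edge of $G$; hence an edit set for $(G,\sigma)$ can never \emph{add} a pair involving $x$ and can only \emph{delete} such a pair. By Lemma \ref{lem:No-x-edits}, no optimal deletion or edit set $F$ for $(G,\sigma)$ deletes any edge at $x$ either, and so every optimal $F$ satisfies $F\subseteq \binom{V-x}{2}$. This yields the key identity
\[
(G\odot F)-x \;=\; H\odot F, \qquad \odot\in\{\setminus,\symdiff\},
\]
for any $F\subseteq\binom{V-x}{2}$, and in particular $x$ remains a hub-vertex (and the unique vertex of color $\sigma(x)$) in $G\odot F$.

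For the ``only if'' direction, let $F$ be an optimal modification set for $(G,\sigma)$, so $F\subseteq\binom{V-x}{2}$ by the above. Then $(G\odot F,\sigma)=(H\odot F,\sigma')+x$ is an $n$-RBMG in which $x$ is a hub-vertex of a unique color, so by Lemma \ref{lem:G+x} the graph $(H\odot F,\sigma')$ is an $(n-1)$-RBMG. To see that $F$ is optimal for $(H,\sigma')$, suppose for contradiction there were $F'\subseteq\binom{V-x}{2}$ with $|F'|<|F|$ making $(H\odot F',\sigma')$ into an $(n-1)$-RBMG. Then, again by Lemma \ref{lem:G+x}, $(G\odot F',\sigma)=(H\odot F',\sigma')+x$ is an $n$-RBMG, contradicting the optimality of $F$ for $(G,\sigma)$.

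The ``if'' direction is symmetric. Let $F$ be an optimal modification set for $(H,\sigma')$; by Remark \ref{obs:F-wellcolored} applied to $H$, $F\subseteq\binom{V-x}{2}$ and pairs no equally colored vertices. Lemma \ref{lem:G+x} gives that $(G\odot F,\sigma)=(H\odot F,\sigma')+x$ is an $n$-RBMG. If $F^\ast$ were a strictly smaller modification set for $(G,\sigma)$, then Lemma \ref{lem:No-x-edits} forces $F^\ast\subseteq\binom{V-x}{2}$, and the ``only if'' argument produces an $(n-1)$-RBMG $(H\odot F^\ast,\sigma')$ with $|F^\ast|<|F|$, contradicting optimality of $F$ for $(H,\sigma')$. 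I expect no deep obstacle: the whole argument is a bookkeeping exercise once Lemma \ref{lem:No-x-edits} (confining optimal edits away from $x$) and Lemma \ref{lem:G+x} (transporting the RBMG property across the addition or removal of a uniquely colored hub-vertex) are in hand. The only mildly subtle point is making sure the editing case is covered in both directions: Lemma \ref{lem:No-x-edits} rules out deletions at $x$, and the fact that $x$ is already adjacent to every other vertex rules out additions at $x$ without any further argument.
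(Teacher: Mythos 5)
Your proof is correct and follows essentially the same route as the paper's: Lemma \ref{lem:No-x-edits} confines optimal sets to $\binom{V-x}{2}$, and Lemma \ref{lem:G+x} transports the RBMG property across adding or removing the uniquely colored hub-vertex, with optimality in each direction obtained by the same contradiction argument. If anything, you are slightly more explicit than the paper in the only-if direction, where you actually verify that $F$ itself is a modification set for $(H,\sigma')$ rather than only comparing optimum values.
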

\begin{proof}		
  First, assume that $(G,\sigma)$ is a properly $n$-colored graph with
  hub-vertex $x$.  Since $(G,\sigma)$ is properly colored, we have
  $\sigma(x)\neq \sigma(v)$ for all $v\in V-x$ by definition of a
  hub-vertex.  Hence, as $(G,\sigma)$ is properly $n$-colored, the graph
  $(H,\sigma')$ must be properly $(n-1)$-colored.  Now let $F$ be an
  optimal deletion, resp., edit set such that $(G\setminus F, \sigma)$,
  resp., $(G\symdiff F, \sigma)$ is an $n$-RBMG.  Assume, for
  contradiction, that $F$ is not optimal for $(H,\sigma')$. Thus, there
  exists a set $F'$ with $|F'|<|F|$ such that $(H',\sigma')$ is an
  $(n-1)$-RBMG, where $H' = H\setminus F'$, resp., $H' = H\symdiff F'$.  By
  Lemma \ref{lem:G+x}, the graph $(H'+x,\sigma)$ is an $n$-RBMG and in
  particular, $H'+x = G\setminus F'$, resp., $H'+x = G\symdiff F'$; a
  contradiction to the optimality of $F$ for $(G,\sigma)$.

  Now assume that $F$ is an optimal deletion or edit set for $(H,\sigma')$
  and let $\odot= \setminus$ or $\odot= \symdiff$, resp.  By construction,
  $H+x = G$ and thus, $(H\odot F)+x = G\odot F$.  Since
  $(H\odot F,\sigma')$ is an $(n-1)$-RBMG, we can apply Lemma \ref{lem:G+x}
  to conclude that $(G\odot F, \sigma)$ is an $n$-RBMG.  Thus, $F$ is a
  deletion, resp., edit set for $(G,\sigma)$.  It remains to show that $F$
  is an \emph{optimal} set for $(G,\sigma)$.  Assume, for contradiction,
  that there exists a optimal deletion, resp., edit set $F'$ for $G$ with
  $|F'|<|F|$ such that $G\odot F'$ is an $n$-RBMG.  Lemma
  \ref{lem:No-x-edits} implies that $F'$ does not contain edges $xv$.
  Hence, $x$ remains a hub-vertex in $G\odot F'$.  Thus, we can apply Lemma
  \ref{lem:G+x} to conclude that the $(n-1)$-colored induced subgraph
  $H' = (G\odot F')-x$ that contains all vertices of $G$ with color
  distinct from $\sigma(x)$, is an $(n-1)$-RBMG.  However, since $F'$ does
  not contain edges $xv$, we have
  $H' = (G\odot F')-x = (G-x)\odot F'= H\odot F'$.  Hence,
  $(H\odot F',\sigma')$ is an $(n-1)$-RBMG and $|F'|<|F |$; contradicting
  the optimality of $F$ for $(H,\sigma')$.
\end{proof}

Currently, there are no known polynomial-time algorithm to verify, for
arbitrary integers $n$, whether a given graph $(G,\sigma)$ is an $n$-RBMG
or not. Hence, we do not know whether \PROBLEM{n-RBMG Deletion} and \PROBLEM{n-RBMG
  Editing} are in NP or not.  Nevertheless, NP-hardness can easily be
shown.

\begin{theorem}
   \PROBLEM{n-RBMG Deletion} and \PROBLEM{n-RBMG Editing} is NP-hard.
\end{theorem}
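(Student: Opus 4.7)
The plan is a straightforward polynomial-time many-one reduction from \PROBLEM{2-RBMG Deletion} (resp.\ \PROBLEM{2-RBMG Editing}), which is already known to be NP-hard by Cor.\ \ref{cor-$2$-NPc}. Given an input $((G,\sigma),k)$ of the 2-colored problem, I would build an $n$-colored instance $((G^*,\sigma^*),k)$ by iteratively adjoining $n-2$ fresh vertices $x_1,\ldots,x_{n-2}$, each assigned a pairwise distinct new color, where every $x_i$ is made adjacent to all vertices already present. Thus each $x_i$ becomes a hub-vertex at the moment it is added, and remains a hub-vertex in $(G^*,\sigma^*)$; the construction is clearly computable in polynomial time, and $k$ is passed on unchanged.

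For correctness, let $(G^{(0)},\sigma^{(0)}) \coloneqq (G,\sigma)$ and let $(G^{(i)},\sigma^{(i)})$ denote the properly $(i+2)$-colored graph obtained after adjoining $x_1,\ldots,x_i$. By construction, $x_i$ is a hub-vertex of $(G^{(i)},\sigma^{(i)})$ and deleting $x_i$ together with its incident edges recovers $(G^{(i-1)},\sigma^{(i-1)})$. Lemma \ref{lem:opt-edit-set}, applied to the hub-vertex $x_i$, then states that $F$ is an optimal deletion (resp.\ edit) set for $(G^{(i)},\sigma^{(i)})$ producing an $(i+2)$-RBMG if and only if $F$ is an optimal such set for $(G^{(i-1)},\sigma^{(i-1)})$ producing an $(i+1)$-RBMG. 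Chaining these equivalences from $i=n-2$ down to $i=1$, one obtains that $F$ is an optimal modification set for $(G^*,\sigma^*)$ towards an $n$-RBMG iff $F$ is an optimal modification set for $(G,\sigma)$ towards a 2-RBMG. In particular, the two instances share the same optimal value, so $(G,\sigma)$ admits a size-$\le k$ solution iff $(G^*,\sigma^*)$ does.

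The main point requiring care is that Lemma \ref{lem:opt-edit-set} is genuinely applicable at every step of the iteration. This reduces to two book-keeping facts: first, that at the $i$-th peeling step $x_i$ really is a hub-vertex of $(G^{(i)},\sigma^{(i)})$, which is immediate from the construction since $x_i$ was attached after $x_1,\ldots,x_{i-1}$ but before any subsequent hub-vertex was inserted; and second, that no optimal modification set ever touches an edge $x_i v$ or contains a pair $\{u,v\}$ with $\sigma^*(u)=\sigma^*(v)$, which is precisely what Lemma \ref{lem:No-x-edits} and Remark \ref{obs:F-wellcolored} guarantee and is exactly why $F$ may be transferred verbatim between the two instances. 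Once these points are in place, NP-hardness of \PROBLEM{n-RBMG Deletion} and \PROBLEM{n-RBMG Editing} follows immediately from Cor.\ \ref{cor-$2$-NPc}.
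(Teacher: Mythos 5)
Your proposal is correct and is essentially the paper's own argument: the paper formalizes the same reduction as an induction on $n$, adding one hub-vertex with a fresh color per step and invoking Lemma \ref{lem:opt-edit-set} to transfer optimal deletion/edit sets, with Cor.\ \ref{cor-$2$-NPc} as the base case. Your unrolled version (adjoining all $n-2$ hub-vertices at once and peeling them off by chained applications of the lemma) is just the iterative phrasing of the identical construction, which the paper itself announces in the paragraph preceding its lemmas.
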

\begin{proof}
  We prove this statement by induction on the number $n$ of colors.  Cor.\
  \ref{cor-$2$-NPc} implies that the base case $n=2$ is NP-complete.

  Now assume that \PROBLEM{(n-1)-RBMG Deletion / Editing} is NP-hard,
  $n\geq 2$.  Let $(H=(V',E'),\sigma')$ be an arbitrary instance of the
  \PROBLEM{(n-1)-RBMG Deletion / Editing} Problem.  Now, we construct an
  instance $(G,\sigma)$ of the \PROBLEM{n-RBMG Editing} as follows: Let
  $G=(V,E)$ be the graph obtained from $(H,\sigma')$ by adding a new vertex
  $x$ that is adjacent to all vertices of $H$, and thus, a hub-vertex in
  $(G,\sigma)$, where $\sigma(v) = \sigma'(v)$ and
  $\sigma(x)\neq \sigma'(v)$ for all $v\in V'$.  Thus,
  $(H,\sigma') = (G-x,\sigma_{|V-x})$ and we can apply Lemma
  \ref{lem:opt-edit-set} to conclude that $F$ is an optimal set such that
  $(G \setminus F,\sigma)$, resp., $(G \symdiff F,\sigma)$ is an $n$-RBMG
  if and only if $F$ is an optimal set such that $(H \setminus F,\sigma')$,
  resp., $(H \symdiff F,\sigma')$ is an $(n-1)$-RBMG, which completes the
  proof.
\end{proof}

So-far we have shown that \PROBLEM{n-RBMG Deletion} and \PROBLEM{n-RBMG Editing} is
NP-hard.  We continue with showing that the problems \PROBLEM{n-\hc-Cograph
  Deletion} and \PROBLEM{n-\hc-Cograph Editing} are NP-complete. The
  proofs are similar to the proofs for the NP-hardness of \PROBLEM{n-RBMG
  Deletion / Editing}.

\begin{lemma}
  Let $n>1$. Then, $(G,\sigma)$ is an $(n-1)$-\hc-cograph if and only if
  $(G+x,\sigma')$ with $\sigma'(v) = \sigma(v)$ and
  $\sigma'(x)\neq \sigma(v)$ for all $v\in V(G)$, is an $n$-\hc-cograph.
  \label{lem:n-1-coRBMG}
\end{lemma}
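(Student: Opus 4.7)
The plan is to reduce this lemma to the analogous Lemma \ref{lem:G+x} for RBMGs via the characterization in Thm.\ \ref{thm:iffs}(2), which identifies $n$-\hc-cographs with exactly those $n$-RBMGs that are also cographs. This route avoids a direct induction on the recursive \AX{(K1)}--\AX{(K3)} structure of an \hc-cograph decomposition.

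First I would record the auxiliary observation that $G$ is a cograph if and only if $G+x$ is a cograph. The ``only if'' direction is immediate because $G+x = G\join K_1$ and the join of two cographs is again a cograph. The ``if'' direction uses that the cograph class is hereditary and $G = (G+x)-x$ is an induced subgraph of $G+x$.

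For the forward direction of the lemma, I assume $(G,\sigma)$ is an $(n-1)$-\hc-cograph. By Thm.\ \ref{thm:iffs}(2), $(G,\sigma)$ is both an $(n-1)$-RBMG and a cograph. Applying Lemma \ref{lem:G+x} to the RBMG part yields that $(G+x,\sigma')$ is an $n$-RBMG, while the auxiliary observation yields that $G+x$ is a cograph. A second application of Thm.\ \ref{thm:iffs}(2) then concludes that $(G+x,\sigma')$ is an $n$-\hc-cograph. The converse direction is entirely symmetric: from $(G+x,\sigma')$ being an $n$-\hc-cograph, Thm.\ \ref{thm:iffs}(2) gives us that it is an $n$-RBMG whose underlying graph is a cograph; Lemma \ref{lem:G+x} and the cograph observation strip off $x$ to obtain an $(n-1)$-RBMG on a cograph; and a final application of Thm.\ \ref{thm:iffs}(2) promotes this to an $(n-1)$-\hc-cograph.

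There is no substantial obstacle: the only step that requires any care is the cograph-preservation property for the hub-addition operation $+x$, which however follows at once from the standard closure properties (join preservation and heredity) of the cograph class. Everything else is a formal combination of Lemma \ref{lem:G+x} and Thm.\ \ref{thm:iffs}(2), which is the reason the statement, proof structure, and role of this lemma mirror those of Lemma \ref{lem:G+x} so closely.
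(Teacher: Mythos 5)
Your proposal is correct and follows essentially the same route as the paper: invoke Thm.~\ref{thm:iffs}(2) to split the \hc-cograph property into the RBMG part (handled by Lemma~\ref{lem:G+x}) and the cograph part (handled by noting $G+x = G \join K_1$). The only cosmetic difference is that you justify the ``if'' direction of the cograph-preservation step via heredity of the cograph class, where the paper asserts it directly from the recursive definition.
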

\begin{proof}
  By Thm.\ \ref{thm:iffs}, $(G,\sigma)$ is an \hc-cograph if and only it is
  an RBMG and a cograph. Thus, we can apply Lemma \ref{lem:G+x} to conclude
  that $(G,\sigma)$ is an $(n-1)$-RBMG if and only if $(G+x,\sigma')$ is an
  $n$-RBMG.  Hence, it remains to show that $G$ is a cograph if and only if
  $G+x$ is a cograph.  Note that $G_x = (V=\{x\}, E=\emptyset)$ is, by
  definition, a cograph. Again, by definition, $G$ is a cograph if and only
  if $G \join G_x = G+x$ is a cograph.
\end{proof}

\begin{lemma}
  Let $(G,\sigma)$ be a properly $n$-colored graph and suppose that
  $(G,\sigma)$ contains a hub-vertex $x$.  Let
  $(H,\sigma')=(G-x, \sigma_{|V-x})$.  Then, $F$ is an optimal deletion,
  resp., edit set such that $(G\setminus F,\sigma)$, resp.,
  $(G\symdiff F,\sigma)$ is an $n$-\hc-cograph if and only if $F$ is an
  optimal deletion, resp., edit set such that $(H\setminus F,\sigma')$,
  resp., $(H\symdiff F,\sigma')$ is an $(n-1)$-\hc-cograph.
  \label{lem:opt-edit-set-cograph}
\end{lemma}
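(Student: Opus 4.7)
The plan is to mirror the proof of Lemma \ref{lem:opt-edit-set}, replacing each invocation of Lemma \ref{lem:G+x} by the \hc-cograph analogue Lemma \ref{lem:n-1-coRBMG}. Since Theorem \ref{thm:iffs} says that \hc-cographs are exactly RBMGs that are cographs, much of the bookkeeping carries over verbatim: first I note that $(H,\sigma')$ is properly $(n-1)$-colored because $x$ is a hub-vertex of the properly $n$-colored graph $(G,\sigma)$ and so $\sigma(x)\neq\sigma(v)$ for every $v\in V-x$, which means only the color $\sigma(x)$ is removed when passing from $(G,\sigma)$ to $(H,\sigma')$.

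Before attacking the equivalence I would first establish the \hc-cograph analogue of Lemma \ref{lem:No-x-edits}: if $x$ is a hub-vertex of $(G,\sigma)$ and $F$ is an optimal \hc-cograph deletion or edit set, then $F$ contains no edge $xv$. The argument parallels the RBMG version: assume for contradiction that $F$ contains such edges, split $F=F_x\cupdot F_{\neg x}$, and consider $(H^*,\sigma)\coloneqq(G\odot F,\sigma)$ with $\odot\in\{\setminus,\symdiff\}$, which is an $n$-\hc-cograph. Its induced subgraph on $L'=V-x$ is a cograph (cographs are closed under induced subgraphs, being $P_4$-free) and, by Obs.\ \ref{obs-1}, an $(n-1)$-RBMG; hence by Thm.\ \ref{thm:iffs} it is an $(n-1)$-\hc-cograph. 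Applying Lemma \ref{lem:n-1-coRBMG} we obtain that $(H^*[L']+x,\sigma)$ is an $n$-\hc-cograph. Since $H^*[L']+x=G\odot F_{\neg x}$, this contradicts the optimality of $F$.

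With this hub-preservation fact in hand, both directions of the equivalence follow. For the ``only if'' direction, suppose $F$ is optimal for $(G,\sigma)$ but some smaller $F'$ yields an $(n-1)$-\hc-cograph $(H\odot F',\sigma')$; then Lemma \ref{lem:n-1-coRBMG} gives that $((H\odot F')+x,\sigma)$ is an $n$-\hc-cograph, and since $x\notin V(H)$ and $F'$ lies inside $H$, this equals $(G\odot F',\sigma)$, contradicting optimality. For the ``if'' direction, optimality of $F$ for $(H,\sigma')$ together with Lemma \ref{lem:n-1-coRBMG} yields that $(G\odot F,\sigma)=((H\odot F)+x,\sigma)$ is an $n$-\hc-cograph, so $F$ is a valid set for $(G,\sigma)$. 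If $F'$ were smaller and still produced an $n$-\hc-cograph from $G$, the hub-preservation statement forces $F'$ to leave all edges $xv$ untouched; consequently $x$ remains a hub-vertex of $(G\odot F',\sigma)$, and $(G\odot F')-x=H\odot F'$. Another application of Lemma \ref{lem:n-1-coRBMG} produces the contradicting $(n-1)$-\hc-cograph.

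The main obstacle is the hub-preservation step, since Lemma \ref{lem:No-x-edits} as stated is about RBMGs, not \hc-cographs. The redeeming fact is that ``restricting to a subset of colors'' preserves the \hc-cograph property, which follows at once by combining Obs.\ \ref{obs-1} with $P_4$-freeness and Thm.\ \ref{thm:iffs}; once that is observed, the rest of the proof is a mechanical rerun of Lemma \ref{lem:opt-edit-set}.
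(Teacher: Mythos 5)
Your proposal is correct and follows essentially the same route as the paper: the paper likewise first establishes the \hc-cograph analogue of Lemma~\ref{lem:No-x-edits} (by the same contradiction via $F_{\neg x}$ and Lemma~\ref{lem:n-1-coRBMG}) and then reruns the argument of Lemma~\ref{lem:opt-edit-set} with Lemma~\ref{lem:n-1-coRBMG} in place of Lemma~\ref{lem:G+x}. Your explicit justification that color-restriction preserves the \hc-cograph property (heredity of cographs plus Obs.~\ref{obs-1} and Thm.~\ref{thm:iffs}) merely spells out what the paper leaves as ``similar arguments.''
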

\begin{proof}
  We show first that an optimal \hc-cograph deletion or edit set $F$ for
  $(G,\sigma)$ does not contain edges $xv$.  Assume, for contradiction,
  that $F$ contains at least one edge $xv\in E$ and let $F_{\neg x}$ be the
  set of edges in $F$ that are not incident to $x$.  Similar arguments as
  in the proof of Lemma \ref{lem:No-x-edits} together with Lemma
  \ref{lem:n-1-coRBMG} show that $ (G\odot F_{\neg x}, \sigma)$ is an
  \hc-cograph and thus, $|F_{\neg x}|< |F|$; a contradiction.

  Now we can reuse the arguments of the proof of Lemma
  \ref{lem:opt-edit-set} by utilizing Lemma \ref{lem:n-1-coRBMG} instead of
  Lemma \ref{lem:G+x}, which completes the proof of this lemma.
\end{proof}

\begin{theorem}
  The problems \PROBLEM{n-\hc-Cograph Deletion} and \PROBLEM{n-\hc-Cograph Editing}
  are NP-complete.
\end{theorem}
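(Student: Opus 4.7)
The plan is to establish both containment in NP and NP-hardness, with NP-hardness following by induction on the number of colors $n$, mirroring the argument used above for \PROBLEM{n-RBMG Deletion / Editing}.

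For containment in NP, I need a polynomial-time verifier for \hc-cograph recognition. Given a candidate set $F$, one computes the cotree of $G\odot F$ via modular decomposition in linear time (rejecting if the graph is not a cograph, since it contains an induced $P_4$), and then traverses the cotree, checking at each internal node labeled ``join'' that the color sets of its two subtrees are disjoint (\AX{K2}) and at each internal node labeled ``disjoint union'' that one of the two color sets is contained in the other (\AX{K3}). These are local checks at each node, so the whole recognition runs in polynomial time; this places both decision problems in NP.

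For NP-hardness, I argue by induction on $n$. The base case $n=2$ is precisely Corollary \ref{cor-$2$-NPc}, which establishes NP-completeness of \PROBLEM{2-\hc-Cograph Deletion} and \PROBLEM{2-\hc-Cograph Editing}. For the inductive step, suppose \PROBLEM{(n-1)-\hc-Cograph Deletion / Editing} is NP-hard, and let $(H=(V',E'),\sigma',k)$ be an arbitrary instance. I construct $(G,\sigma)$ by adjoining a new vertex $x$ adjacent to every vertex of $H$ and assigning it a color $\sigma(x)$ not appearing in $\sigma'(V')$; the remaining vertices keep their colors. Then $(G,\sigma)$ is properly $n$-colored with $x$ a hub-vertex, and $(H,\sigma')=(G-x,\sigma_{|V-x})$. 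Applying Lemma \ref{lem:opt-edit-set-cograph} to this construction, $F$ is an optimal deletion or edit set such that $(G\setminus F,\sigma)$, resp., $(G\symdiff F,\sigma)$ is an $n$-\hc-cograph if and only if $F$ is an optimal deletion or edit set such that $(H\setminus F,\sigma')$, resp., $(H\symdiff F,\sigma')$ is an $(n-1)$-\hc-cograph. In particular, a solution of size at most $k$ exists for one instance if and only if it exists for the other, giving the required polynomial-time many-one reduction.

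The main technical obstacle is ensuring the NP membership argument is sound; unlike general $n$-RBMG recognition (whose complexity is open for $n>3$, as noted in the paper), \hc-cograph recognition is tractable precisely because the cograph substrate gives direct access to a cotree on which the coloring conditions (\AX{K2}) and (\AX{K3}) can be verified locally. Once NP membership is secured, the inductive reduction is a routine adaptation of the RBMG argument, with Lemma \ref{lem:n-1-coRBMG} and Lemma \ref{lem:opt-edit-set-cograph} playing the roles that Lemma \ref{lem:G+x} and Lemma \ref{lem:opt-edit-set} played in the RBMG case.
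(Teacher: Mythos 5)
Your proposal is correct and follows essentially the same route as the paper: NP-membership via polynomial-time \hc-cograph recognition (which the paper simply cites as \cite[Thm.\ 11]{Geiss:19x} rather than sketching a cotree-based verifier), and NP-hardness by the identical induction on $n$ with Cor.\ \ref{cor-$2$-NPc} as base case and the hub-vertex reduction through Lemma \ref{lem:opt-edit-set-cograph}. One small caution on your verifier sketch: cotree nodes obtained from modular decomposition need not be binary, and at a union node condition \AX{(K3)} asks for the existence of \emph{some} binary refinement in which every merge satisfies the color-containment requirement, so the check is slightly less local than ``one of the two color sets is contained in the other''---but this only concerns the recognition subroutine, which the paper outsources anyway.
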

\begin{proof}
  \cite[Thm.\ 11]{Geiss:19x} shows that it can be verified in polynomial
  time whether a given colored graph $(G,\sigma)$ is an $n$-\hc-cograph or
  not. Thus, \PROBLEM{n-\hc-Cograph Deletion} and \PROBLEM{n-\hc-Cograph Editing}
  are contained in the class NP.

  The proofs that \PROBLEM{n-\hc-Cograph Deletion} and \PROBLEM{n-\hc-Cograph
    Editing} are NP-hard work exactly in the same way as the proof for
  showing the NP-hardness of \PROBLEM{n-RBMG Deletion} and \PROBLEM{n-RBMG
    Editing}.  Simply replace all $k$-RBMG by $k$-\hc-cograph in all of the
  desired steps and use Lemma \ref{lem:opt-edit-set-cograph} instead of
  Lemma \ref{lem:opt-edit-set}.
\end{proof}

As a consequence of Theorem \ref{thm:iffs}, we obtain
\begin{corollary}
  Let $(G,\sigma)$ be a properly $n$-colored graph and $k$ be an integer.
  Deciding whether there is a set $F \subseteq \binom{V}{2}$ of size
  $|F|\leq k$ such that $(G\setminus F,\sigma)$, resp.,
  $(G\symdiff F,\sigma)$ is an RBMG and a cograph is NP-complete.
\end{corollary}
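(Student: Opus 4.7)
The plan is to observe that the problem in the corollary is essentially already solved by combining the preceding theorem with Theorem~\ref{thm:iffs}. By Theorem~\ref{thm:iffs}(2), the statements ``$(G,\sigma)$ is an RBMG and a cograph'' and ``$(G,\sigma)$ is an \hc-cograph'' are equivalent. Hence, asking whether there exists a set $F$ of size at most $k$ such that $(G\setminus F,\sigma)$ is an RBMG and a cograph is logically identical to \PROBLEM{n-\hc-Cograph Deletion}, and analogously the editing version coincides with \PROBLEM{n-\hc-Cograph Editing}.

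For NP membership I would simply recall that, by \cite[Thm.~11]{Geiss:19x} (already invoked in the preceding proof), one can check in polynomial time whether a colored graph $(G,\sigma)$ is an \hc-cograph. Given a candidate set $F$, we therefore verify in polynomial time that $|F|\leq k$ and that $(G\setminus F,\sigma)$, resp.\ $(G\symdiff F,\sigma)$, is an \hc-cograph, which by Theorem~\ref{thm:iffs}(2) is the same as certifying that it is simultaneously an RBMG and a cograph.

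For NP-hardness the reduction is the identity: every instance $((G,\sigma),k)$ of \PROBLEM{n-\hc-Cograph Deletion} (resp.\ \PROBLEM{n-\hc-Cograph Editing}) is a yes-instance if and only if it is a yes-instance of the problem stated in the corollary, again by Theorem~\ref{thm:iffs}(2). Since the preceding theorem establishes that \PROBLEM{n-\hc-Cograph Deletion} and \PROBLEM{n-\hc-Cograph Editing} are NP-complete, the result follows. There is no real obstacle here; the only point worth emphasizing in the write-up is that one must explicitly invoke the equivalence of the three characterizations in Theorem~\ref{thm:iffs}(2), since without it the class of ``RBMGs that are also cographs'' looks a priori like an intersection of two potentially unrelated conditions rather than a single well-understood object.
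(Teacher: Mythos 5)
Your proposal is correct and matches the paper's intent exactly: the paper derives this corollary directly from the preceding NP-completeness theorem for \PROBLEM{n-\hc-Cograph Deletion/Editing} together with the equivalence in Thm.~\ref{thm:iffs}(2), which is precisely the identity reduction you describe. Nothing further is needed.
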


The property of being an RBMG or an \hc-cograph is not hereditary.  As an
example consider the disconnected \hc-cograph $(G,\sigma)$ in Fig.\
\ref{fig:exmpl}.  Now take vertex $a_1$ and $b_2$. We have
$\sigma(a_1)\neq \sigma(b_2)$ and as an induced subgraph
$(G[a_1,b_2],\sigma_{|a_1b_2})$ a $2$-colored but edge-less graph.  By
Thm.\ \ref{thm:iffs}(2) and (4), $(G[a_1,b_2],\sigma_{|a_1b_2})$ can
neither be an \hc-cograph nor an RBMG. As a consequence, one cannot use
  the general results on the complexity of graph modification problems for
  general hereditary graph classes outlined e.g.\ in \cite{CAI:96}.

For the special case of the \PROBLEM{2-RBMG Editing} problem, however, we
can use established results since \PROBLEM{2-RBMG Editing} and
\PROBLEM{Bicluster Editing} are equivalent as long as the input graph
contains at least one edge (cf.\ Remark \ref{rem:withEdges} and Lemma
\ref{lem:1edge}).  For the \PROBLEM{Bicluster Editing} Problem,
Amit \cite{Amit:04} gave a factor-11 approximation.  Protti et al.\ \cite{Protti2009} showed
that \PROBLEM{Bicluster Editing} problem with input $(G=(V,E),\sigma)$ and
integer $k$ is FPT, and can be solved in $O(4^k+|V|+|E|)$ time by a
standard search tree algorithm.  Moreover, they showed how to construct a
problem kernel with $O(k^2)$ vertices in $O(|V|+|E|)$ time. Guo et al.\ \cite{Guo+08}
improved the latter results to a problem kernel with $O(k)$ vertices and an
FPT-algorithm with running time $O(3.24^k + |E|)$. These results together
with Cor.\ \ref{cor:equiDefs} imply
\begin{theorem} \PROBLEM{2-RBMG Editing} with input $(G,\sigma)$ and integer
  $k$, has a problem kernel with $O(k)$ vertices and an FPT-algorithm with
  running time $O(3.24^k + |E|)$.
\end{theorem}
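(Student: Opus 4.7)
My plan is to reduce \PROBLEM{2-RBMG Editing} to \PROBLEM{Bicluster Editing} and then invoke the known kernelization and FPT algorithm of Guo et al.\ \cite{Guo+08} verbatim. The reduction direction is essentially already packaged by the preceding results, so the proof amounts to making the equivalence precise and verifying that the sizes of optimal edit sets agree on both sides.

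First I would fix notation. Let $(G=(V,E),\sigma)$ be an input instance of \PROBLEM{2-RBMG Editing}, and write $V_1 \cupdot V_2$ for the bipartition of $V$ induced by the two color classes of $\sigma$. By Remark \ref{rem:withEdges} we may assume $E \neq \emptyset$, and by Remark \ref{obs:F-wellcolored} any optimal edit set $F$ satisfies $F \subseteq V_1 \otimes V_2$; hence $G$ is bipartite and so $(G,V_1 \cupdot V_2)$ is a legitimate input for \PROBLEM{Bicluster Editing} with the same parameter $k$.

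Next I would invoke Cor.\ \ref{cor:equiDefs} to translate between the two problems. If $F$ is an optimal edit set for \PROBLEM{2-RBMG Editing}, then $(G \symdiff F,\sigma)$ is a 2-RBMG, hence a properly 2-colored bicluster graph, so $F$ is an edit set (not necessarily optimal yet) for \PROBLEM{Bicluster Editing}. Conversely, if $F \subseteq V_1 \otimes V_2$ is an \emph{optimal} edit set for \PROBLEM{Bicluster Editing}, then Lemma \ref{lem:1edge} guarantees that $G \symdiff F$ still contains at least one edge, so Cor.\ \ref{cor:equiDefs} lets me promote it to a 2-RBMG. A short minimality argument (any $F$ that is optimal on one side would contradict optimality on the other, since both sides range over the same set $V_1 \otimes V_2$ of candidate pairs) then shows the minimum edit-set sizes coincide. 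Consequently the decision version of \PROBLEM{2-RBMG Editing} on $(G,\sigma,k)$ accepts if and only if \PROBLEM{Bicluster Editing} on $(G,V_1 \cupdot V_2,k)$ accepts.

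Finally I would plug this equivalence into the existing parameterized machinery. The kernelization of Guo et al.\ \cite{Guo+08} produces, in $O(|V|+|E|)$ time, a reduced bipartite instance $(G',V_1' \cupdot V_2',k')$ with $k' \leq k$ and $|V(G')| = O(k)$; interpreting the bipartition back as a 2-coloring $\sigma'$ yields an equivalent \PROBLEM{2-RBMG Editing} kernel of the claimed size. The search-tree algorithm of \cite{Guo+08} then solves \PROBLEM{Bicluster Editing} on the kernel in $O(3.24^k + |E|)$ time, and by the equivalence above its output immediately answers the \PROBLEM{2-RBMG Editing} instance within the same running time. I do not foresee any real obstacle here: the only subtlety is ensuring that the ``at least one edge'' side condition of Cor.\ \ref{cor:equiDefs} cannot be violated by an optimal bicluster-editing solution, and this is exactly what Lemma \ref{lem:1edge} delivers.
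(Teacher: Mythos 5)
Your proposal is correct and takes essentially the same route as the paper, which obtains the theorem directly from the equivalence of \PROBLEM{2-RBMG Editing} and \PROBLEM{Bicluster Editing} (via Remark~\ref{rem:withEdges}, Remark~\ref{obs:F-wellcolored}, Lemma~\ref{lem:1edge} and Cor.~\ref{cor:equiDefs}) combined with the $O(k)$-vertex kernel and $O(3.24^k + |E|)$ algorithm of Guo et al.\ \cite{Guo+08}. You merely spell out the equivalence in more detail than the paper, which states the result as an immediate consequence of the preceding discussion.
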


Moreover, Guo et al.\ \cite{Guo+08} provided a randomized 4-approximation algorithm
improving the factor-11 approximation Amit \cite{Amit:04}.  Due to its
importance to integrate and analyze high-dimensional biological data on a
large scale many heuristics for the \PROBLEM{Bicluster Editing} problem
have been established in the last few years
\cite{deSousaFilho2017,SRP+14,de2012hybrid,GRS+15,puleo2018correlation,Chen12253,cheng2000biclustering,BUSYGIN20082964,LSH+10,TBK:05},
which can directly be applied for the \PROBLEM{2-RBMG Editing}.  For the
\PROBLEM{n-RBMG Editing} and the \PROBLEM{n-\hc-Cograph Editing} problem
for an arbitrary number of colors $n$, there are, to our knowledge, no
heuristics nor parameterized algorithms available so-far.

\section{Summary and Outlook}

We have shown that the four problems \PROBLEM{n-RBMG Editing},
\PROBLEM{n-\hc-Cograph Editing}, \PROBLEM{n-RBMG Deletion} and
\PROBLEM{n-\hc-Cograph Deletion} are NP-hard.  In addition, the two
\hc-cograph modification problems are NP-complete, and \hc-cograph
modification is equivalent to modifying a given graph into an RBMG
that represents an orthology relation.

We are left with some of open problems.  Since the latter four modification
problems are NP-hard, it is necessary to design efficient heuristics or
parameterized algorithms in order to correct graphs inferred from sequence
data to RBMGs or \hc-cographs. Although, we obtained an FPT-algorithm for
\PROBLEM{2-RBMG Editing} as a trivial by-product, so far no further
  results are available for a general number $n$ of colors.
 
Moreover, it is not known whether \PROBLEM{n-RBMG Editing} for $n\geq 4$ is
in NP or not.  While $2$- and $3$-RBMGs can be recognized in polynomial
time \cite{Geiss:19x}, it is not known whether there exists polynomial-time
algorithms for the recognition of $n$-RBMGs with $n\geq 4$.  In addition,
the complexity of the $n$-RBMG Completion Problem (i.e., adding a minimum
number of edges to obtain an $n$-RBMG) remains unsolved. We emphasize
  that this problem is not solved by adding to each $2$-colored induced
  subgraphs the minimum number of edges required to turn it into a
  bicluster graph, see \cite[Fig.\ 8(A)]{Geiss:19x} for a counterexample.

Furthermore, we observed in \cite{GGL:19} that for many of the estimated
RBMGs the quotient graph w.r.t.\ the colored thinness relation (which
  identifies vertices of the same color in $(G,\sigma)$ that have the same
  color and the same neighborhood) are $P_4$-sparse, i.e., each of its
induced subgraphs on five vertices contains at most one induced $P_4$
\cite{Jamison:92}.  It is well-known that uncolored $P_4$-sparse graphs can
be optimally edited to cographs in linear time \cite{LIU201245}.  This begs
the questions whether it is also possible to edit a $P_4$-sparse RBMG
  to an \hc-cograph in polynomial time.  More generally, can one
efficiently edit a not necessarily $P_4$-sparse or a weighted $P_4$-sparse RBMG to an \hc-cograph?

In this contribution we have considered graph modification problems
  for RBMGs. RBMGs are the symmetric part of the Best Match Graphs (BMGs)
  \cite{Geiss:19a}. In practical applications BMGs are initially estimated
  from sequence data and then processed to extract (approximate)
  RBMGs. Thus is seems natural to consider the corresponding digraph
  modification problems of BMGs. At present, this remains an open problem.

\section*{Acknowledgments}
This work was support in part by the German Federal Ministry of Education
and Research (BMBF, project no.\ 031A538A, de.NBI-RBC).

\bibliographystyle{plain}
\bibliography{hc-co}


\end{document}